\newcommand{\A}{\mathcal{A}}
\newcommand{\B}{\mathcal{B}}
\newcommand{\D}{\mathcal{D}}
\newcommand{\T}{\mathcal{T}}
\newcommand{\F}{\mathcal{F}}
\newcommand{\Hy}{\mathcal{H}}
\newcommand{\dcpriorityopt}{\textsc{Don't Care Priority Optimization}}
\newcommand{\gendcpriorityopt}{\textsc{Quotient Don't Care Priority Optimization}}
\newcommand{\nat}{\mathbb{N}}
\newcommand{\infs}[1][\A]{\mathit{Inf}_{#1}}
\newcommand{\ind}{\textit{Ind}}
\newcommand{\mscc}{\text{MSCC}}
\newcommand{\comp}[1]{\overline{#1}}
\newcommand{\TLD}{\T_{L,D}}
\newcommand{\TUD}{\T_{U,D}}
\newcommand{\simLD}{\sim_{L,D}}
\newcommand{\TSf}{\T_{S,f}}
\newcommand{\IB}{\mathbb{IB}}
\newcommand{\IC}{\mathbb{IC}}
\newcommand{\IP}{\mathbb{IP}}
\newcommand{\IM}{\mathbb{IM}}
\newcommand{\MQ}{\textit{MEMBER}}
\newcommand{\EQ}{\textit{EQUIV}}
\begin{document}


\setcounter{page}{69}
\publyear{22}
\papernumber{2152}
\volume{189}
\issue{1}

\finalVersionForARXIV


\title{On Minimization and Learning of Deterministic  $\omega$-Automata \\ in the Presence of Don't Care Words}

\author{Christof L\"oding\thanks{Supported by DFG grant LO 1174/7-1}\thanks{Address for
                               correspondence:  Department of Computer Science,  RWTH Aachen University,
                               Germany. \newline \newline
                          \vspace*{-6mm}{\scriptsize{Received November 2022; \ accepted  April  2023.}}},\,   Max Philip Stachon
                    \\
  Department of Computer Science\\
  RWTH Aachen University, Germany \\
 loeding@automata.rwth-aachen.de\\
max.stachon@rwth-aachen.de}

\maketitle

\runninghead{C.~L\"oding and M.P.~Stachon}{Minimization and Learning of $\omega$-Automata with Don't Cares}

\begin{abstract}
  We study minimization problems for deterministic $\omega$-automata in the presence of don't care words. We prove that the number of priorities in deterministic parity automata can be efficiently minimized under an arbitrary set of don't care words. We derive that from a more general result from which one also obtains an efficient minimization algorithm for deterministic parity automata with informative right-congruence (without don't care words).
  We then analyze languages of don't care words with a trivial right-congruence. For such sets of don't care words it is known that weak deterministic B\"uchi automata (WDBA) have a unique minimal automaton that can be efficiently computed from a given WDBA (Eisinger, Klaedtke 2006). We give a congruence-based characterization of the corresponding minimal WDBA, and show that the don't care minimization results for WDBA do not extend to deterministic $\omega$-automata with informative right-congruence: for this class there is no unique minimal automaton for a given don't care set with trivial right congruence, and the minimization problem is NP-hard. Finally, we extend an active learning algorithm for WDBA (Maler, Pnueli 1995) to the setting with an additional set of don't care words with trivial right-congruence.
\end{abstract}

\begin{keywords}
deterministic $\omega$-automata, don't care words, learning, minimization
\end{keywords}

\section{Introduction}

In this paper we consider minimization and learning problems with don't care sets for deterministic automata on infinite words, called $\omega$-automata in the following. For finite words, it is well known that there is a unique minimal deterministic finite automaton (DFA) for each regular language $L$, which can be computed efficiently by merging language equivalent states in a given DFA for $L$ (see, e.g., \cite{HopcroftU79}). The minimal DFA can be characterized by the Myhill-Nerode congruence on finite words, in which two words are equivalent if they have the same extensions into the language $L$. The states of the minimal DFA correspond to the equivalence classes of this congruence. This characterization is also the basis for active learning algorithms that identify a DFA for an unknown language $L$ by equivalence and membership queries \cite{ANGLUIN198787}.

In some applications, some input words might not play a role because they do not occur and thus it does not matter whether the automaton accepts them. Such a set of input words is called a don't care set in this paper. The minimization problem with a given don't care set is then to compute a minimal automaton that behaves the same as the given one on all words that are not in the don't care set. Minimal automata in this setting need not to be unique, and the problem (its decision variant) becomes NP-hard \cite{Pfleeger73}.

For deterministic $\omega$-automata, the minimization problem (without don't cares) is harder than for DFAs. There is a variety of acceptance conditions for $\omega$-automata, the most basic one being the B\"uchi condition. A deterministic B\"uchi automaton (DBA) has a set of accepting states, like a DFA, and a run (an infinite state sequence) is accepting if an accepting state is visited infinitely often. It turns out that the minimization problem for DBA is NP-hard \cite{schewe2010beyond}, and minimal DBA for an $\omega$-language are, in general, not unique. Since a deterministic parity automaton (DPA) for a DBA-recognizable language can be turned into an equivalent DBA without changing the transition structure (as shown in \cite{KrishnanPB94} for Rabin conditions, which include parity conditions), minimization for DPA is also NP-hard. In a parity condition, each state is assigned a priority (a natural number), and a run is accepting if the maximal priority that occurs infinitely often is even. DPAs are interesting for algorithmic use because they capture the full class of regular $\omega$-languages (see \cite{THOMAS1990133,2001automata} for background on $\omega$-automata), and their acceptance condition is specified in a very compact way. For this reason we mainly study DPA and subclasses thereof in this paper (note that a B\"uchi condition can be expressed with priorities $2$ for accepting and $1$ for non-accepting).

The only class of deterministic $\omega$-automata for which an efficient minimization algorithm is known is the one of weak deterministic B\"uchi automata (WDBA) \cite{LODING2001105}, in which every strongly connected set consists only of accepting states, or only of non-accepting states. Minimal WDBA can also be characterized by the Myhill-Nerode congruence (its natural extension to infinite words) \cite{STAIGERL1974,MALER199793,WAGNER1979123}, and (a variant of) the active learning algorithm for DFAs from \cite{ANGLUIN198787} has been adapted to WDBA \cite{MALER1995316}. As for minimization, this is the only class of $\omega$-automata for which a polynomial time active learning algorithm is known. Other active learning algorithms for regular $\omega$-languages as in \cite{AngluinF16} learn different representations that can only be translated to deterministic $\omega$-automata with an exponential cost.

In \cite{EisingerK06} it is shown that there is an interesting class of don't care sets for which the minimization problem for WDBAs can be solved efficiently: If the don't care set is a regular $\omega$-language with a trivial Myhill-Nerode congruence consisting only of one class (in the following, we refer to the Myhill-Nerode congruence simply as the canonical right-congruence of the language). For finite words the only languages with this property are the trivial languages (all words or the empty language). For $\omega$-words, however, all languages in which membership only depends on the infinite suffix of the word have a trivial right-congruence, for example, the set of all infinite words with infinitely many $b$. In \cite{EisingerK06} such don't care sets are used for reducing the size of WDBAs for sets of real numbers by making certain representations of real numbers don't cares.

This raises one of the questions that we study in this paper, which we answer positively: Can the active learning algorithm from \cite{MALER1995316} be extended to this setting of a don't care set with trivial right-congruence?

We also investigate whether the minimization result for WDBAs with don't care sets can be extended to larger classes of $\omega$-automata. Since the correspondence between the states of the minimal automaton and the classes of the right-congruence plays an important role in all these results, we consider the class of deterministic $\omega$-automata with informative right-congruence (abbreviated IRC) \cite{Angluin_2018}. These are automata that use only one state per equivalence class, so their transition system is isomorphic to the transition system induced by the canonical right-congruence of the accepted language. Minimal WDBA have informative right-congruence, and we study here the question whether efficient minimization of DPA with IRC is possible under a don't care set with trivial right-congruence.

Finally, we also consider the problem of optimizing the acceptance condition of a DPA under a don't care set. It is known that for a given DPA, one can efficiently compute an equivalent parity condition on the same transition structure that uses the minimal number of different priorities \cite{CartonM99}. In \cite{LodingP19b} such a priority minimization has turned out to be a useful step before applying heuristics for reducing the state space of DPA resulting from a determinization procedure. The number of priorities might be further  reducible if we are given a set of don't cares.

\medskip
Our contributions on these questions can be summarized as follows:
\begin{itemize}
\item We show that polynomial time priority minimization on the transition structure of a given DPA is possible for a don't care set that is given by a DPA (in this case, it does not need to have trivial right-congruence).
\item We show that the minimization result for WDBA under don't care sets with trivial right-congruence does not extend to DPA with IRC: Given a DPA with IRC and a don't care set with trivial right-congruence, there is, in general, not a unique minimal DPA under this don't care set, and (the decision variant of) this don't care minimization problem is NP-hard.
\item The active learning algorithm for WDBAs can be adapted to the setting with a don't care set with trivial right-congruence.
\end{itemize}

The remainder of this paper is structured as follows.
In Section~\ref{sec:prelim} we introduce basic notation and terminology. In Section~\ref{sec:priority-optimization} we consider the problem of optimizing parity conditions under a given set of don't care words. The state minimization problem for don't care sets with trivial right-congruence is discussed in Section~\ref{sec:state-minimization}, and in Section~\ref{sec:learning} we present the active learning algorithm for WDBAs under a don't care set with trivial right-congruence. In Section~\ref{sec:conclusion} we conclude.

\section{Preliminaries}\label{sec:prelim}

An alphabet $\Sigma$ is a finite set whose elements are called letters or symbols. By $\Sigma^*$ we denote the set of all finite words over $\Sigma$, by $\epsilon$ the empty word, $\Sigma^+:=\Sigma^*\backslash \{\epsilon\}$, and $\Sigma^\omega$ denotes the set of all $\omega$-words over $\Sigma$. For $\alpha \in \Sigma^\omega$ we write $\alpha(i)$ for the letter of $\alpha$ at position $i$. An $\omega$-word $\alpha \in \Sigma^\omega$ is ultimately periodic if $\alpha=uv^\omega:=uvvv \dots $ for some $u,v \in \Sigma^*$ with $v \neq \epsilon$.
An $\omega$-language is a subset of $\Sigma^\omega$, simply referred to as language in the following. The complement of a language is denoted by $\overline{L} = \Sigma^\omega \setminus L$.

\medskip
Deterministic $\omega$-automata consist of a deterministic transition graph (or deterministic transition system), and an acceptance condition. The deterministic transition graph is a tuple $\T=(\Sigma,Q,\delta,q_0)$, where $\Sigma$ is an alphabet, $Q$ is a finite set of states, $\delta:Q \times \Sigma \rightarrow Q$ is a transition function, and $q_0$ is the initial state. We define the extended transition function as $\delta^*: \Sigma^* \rightarrow Q$ with $\delta^*(q,\epsilon):=q$ and $\delta^*(q,\sigma w)=\delta^*(\delta(q,\sigma),w)$ for $q \in Q$, $\sigma \in \Sigma$, and $w \in \Sigma^*$. We sometimes write $\delta^*(w)$ for $\delta^*(q_0,w)$. Further, we assume that all states are reachable from the initial state.

We say that two states $p,q \in Q$ are strongly connected if we can reach $p$ from $q$ and vice versa, that is, there are $u,v$ with $\delta^*(p,u) = q$ and $\delta^*(q,v) = p$. A strongly connected component (SCC) is a subset of states that are strongly connected, and a maximal strongly connected component (MSCC) is an SCC that is not a proper subset of any other SCC. We define $\mscc(q):=\{p \in Q\mid p \text{ is in the same MSCC as } q\}$.

For $\alpha \in \Sigma^\omega$, the run of $\alpha$ in $\T$ is an infinite sequence of states $\rho \in Q^\omega$ with $\rho(0)=q_0$ and $\rho(i+1)=\delta(\rho(i),\alpha(i))$ for all $i \in \nat$. The set of states that occur infinitely often in this run is called the infinity set, and is denoted by $\mathit{Inf}(\rho):=\{q \in Q: \rho(i)=q \text{ for infinitely many } i \in \nat\}$. Instead of referring to $\rho$ directly, we often write $\infs[\T](\alpha)$ for the set of states occurring infinitely often in the run of $\alpha$ in $\T$ (we also use this notation for automata instead of transition systems).

Below we define different standard types of $\omega$-automata.
A deterministic B\"uchi automaton (DBA) $\mathcal{A}=(\Sigma,Q,\delta,q_0,F)$ has an acceptance condition in the form of a set $F \subseteq Q$ of accepting (or final) states. An $\omega$-word is accepted by the DBA if and only if it produces a run that visits an accepting state infinitely often.
More formally, we say that $\mathcal{A}$ recognizes the $\omega$-language $L(\mathcal{A})=\{\alpha \in \Sigma^\omega\mid \infs(\alpha) \cap F \neq \emptyset\}$.
The class of DBA-recognizable $\omega$-languages is denoted as $\mathbb{DB}$.

A deterministic co-B\"uchi automaton (DCA) $\mathcal{B}=(\Sigma,Q,\delta,q_0,F)$ also has a set $F \subseteq Q$ of accepting states and recognizes the $\omega$-language $L(\mathcal{B})=\{\alpha \in \Sigma^\omega\mid \infs(\alpha) \subseteq F\}$.
In other words, a DCA accepts an $\omega$-word if and only if it produces a run that exclusively visits accepting states infinitely often.
We denote the class of DCA-recognizable $\omega$-language as $\mathbb{DC}$.
Note that for a DBA $\mathcal{A}=(\Sigma,Q,\delta,q_0,F)$, the DCA $\mathcal{B}=(\Sigma,Q,\delta,q_0,Q\backslash F)$ recognizes $\overline{L(\A)}$ (while the classes $\mathbb{DB}$ and $\mathbb{DC}$ are not closed under complement, see \cite{THOMAS1990133}).

A weak deterministic B\"uchi automaton (WDBA) is a DBA with the additional property that each MSCC is either contained in $F$ or disjoint from $F$.
It follows that any WDBA can be interpreted as a DCA recognizing the same $\omega$-language. Furthermore, the intersection $\mathbb{DB} \cap \mathbb{DC}$ defines precisely the class of WDBA-recognizable $\omega$-languages \cite{STAIGERL1974,MALER199793,WAGNER1979123}.

A deterministic parity automaton (DPA) $\mathcal{A}=(\Sigma,Q,\delta,q_0,c)$ has an acceptance condition in the form of a priority function $c:Q\rightarrow \nat$.
An $\omega$-word is accepted by the DPA if and only if it produces a run such that the highest priority of any state that is visited infinitely often is even.
More formally, we say that $\mathcal{A}$ recognizes the $\omega$-language $L(\mathcal{A})=\{\alpha \in \Sigma^\omega\mid \max\{c(q)\mid q \in \infs(\alpha)\}\text{ is even.}\}$.
For a set $P\subseteq Q$ of states we let $c(P) = \{c(p) \mid p \in P\}$.
For a deterministic Muller automaton (DMA) $\mathcal{A}=(\Sigma,Q,\delta,q_0,\mathcal{F})$ the acceptance condition is given as the set of accepting infinity sets $\mathcal{F} \subseteq 2^Q$ with
$L(\mathcal{A})=\{\alpha \in \Sigma^\omega\mid \infs(\alpha) \in \mathcal{F}\}$.
We refer to the class of DPA-recognizable $\omega$-languages as $\mathbb{DP}$ and the class of DMA-recognizable $\omega$-languages as $\mathbb{DM}$.

Note that Muller conditions are the most general ones since all the other conditions that are defined here can be expressed as Muller condition.
For algorithmic questions, however, Muller conditions are not well suited because a Muller condition needs to explicitly list all accepting infinity sets. For this reason, we mainly consider WDBA, DBA, DCA, and DPA in this article. It is well known that DPA are as expressive as DMA, and that DBA and DCA are less expressive, that is,  $\mathbb{DB} \cup \mathbb{DC} \subsetneq \mathbb{DP}=\mathbb{DM}$ (see, e.g., \cite{2001automata}). An $\omega$-language is regular if it can be accepted by a DPA or DMA.

A deterministic $\omega$-automaton is called minimal if there is no deterministic $\omega$-automaton of the same type with less states that accepts the same language. Besides the number of states, there is also the question of the number of priorities used in a DPA. We thus call the priority function of a DPA optimal if there is no priority function on the same transition graph that uses less priorities and accepts the same language. Note that one could also use the number of transitions as criterion for minimality, but for (complete) deterministic automata there is one transition for every combination of state and input letter, so these two measures for minimality are equivalent in our setting.

A right-congruence (on finite words) is an equivalence relation ${\sim} \subseteq \Sigma^* \times \Sigma^*$ with the property that $u \sim v$ implies $u\sigma \sim v\sigma$ for all $u,v \in \Sigma^*$ and $\sigma \in \Sigma$. We write $[u]_\sim$ for the equivalence class of $u$ in $\sim$. If $\sim$ is clear from the context, we simply write $[u]$. The index of $\sim$, which is the number of equivalence classes of $\sim$, is denoted by $\ind(\sim)$.
A right-congruence $\sim$ of finite index induces a finite deterministic transition system $\T_\sim = (\Sigma, Q_\sim, \delta_\sim,[\epsilon]_\sim)$ with $Q_\sim = \{[u]_\sim \mid u \in \Sigma^*\}$ and $\delta_\sim([u],\sigma) = [u\sigma]_\sim$. Vice versa, each deterministic transition system induces a right congruence in which two words are equivalent if they lead to the same state.

For $L \subseteq \Sigma^\omega$, the canonical right-congruence $\sim_L$ of $L$ is defined by $u \sim_L v$ if and only if \linebreak
$(u\alpha \in L \Leftrightarrow v\alpha \in L)$
for all $\alpha \in \Sigma^\omega$.
A deterministic $\omega$-automaton that accepts a language $L$ has informative right-congruence (IRC) if its transition graph is isomorphic to $\T_{\sim_L}$.
It is not hard to see that, in general, regular $\omega$-languages cannot be accepted by automata with informative right-congruence  (see \cite{Angluin_2018} for a detailed analysis of automata with IRC).
By $\mathbb{IB}$, $\mathbb{IC}$, $\mathbb{IP}$, and $\mathbb{IM}$ we denote the classes of $\omega$-languages that can be accepted by a deterministic $\omega$-automaton with IRC and the corresponding accepting condition (B\"uchi, co-B\"uchi, parity, Muller, respectively).
Note that in any DMA (and hence also DBA, DCA, DPA) for $L$, two words $u,v \in \Sigma^*$ leading to the same state must be $\sim_L$-equivalent. Thus, there is a surjective homomorphism from the transition graph $\T$ of the DMA into $\T_{\sim_L}$. The notion of homomorphism for two transition graphs $\T_1,\T_2$ with $\T_i = (\Sigma,Q_i,\delta_i,q_{0,i})$ is defined as usual. It is a function $h:Q_1 \rightarrow Q_2$ with $h(q_{0,1}) = q_{0,2}$ such that $h(\delta_1(q,a)) = \delta_2(h(q),a)$ for all $q \in Q_1$ and $a \in \Sigma$. If there is a surjective homomorphism from $\T_1$ to $\T_2$, we also say that $\T_2$ is a quotient of~$\T_1$.

\emph{Don't care words} are redundant $\omega$-words that we need not consider, that is to say we do not care whether these words are accepted or rejected by an $\omega$-automaton. Let $L,L' \subseteq \Sigma^\omega$ be two $\omega$-languages and $D \subseteq \Sigma^\omega$ a set of don't care words. We say that $L$ and $L'$ are $D$-equivalent, denoted as $L \equiv_D L'$, if and only if $L \backslash D = L' \backslash D$. We say that a deterministic $\omega$-automaton $\mathcal{A}$ is $D$-minimal if and only if there is no deterministic $\omega$-automaton $\mathcal{B}$ of the same type with $L(\mathcal{A})\equiv_D L(\mathcal{B})$ with less states than~$\mathcal{A}$.

\section{Priority optimization}\label{sec:priority-optimization}

In this section we consider the problem of optimizing a parity condition under a given don't care set, referred to as {\dcpriorityopt}: Given DPAs $\A,\D$ with  $\A = (\T,c)$ and $D := L(\D)$, find a parity condition $c'$ on $\T$ with the smallest number of priorities such that $L(\T,c')  \equiv_D L(\T,c)$. Figure~\ref{fig:prio-opt} shows an example of a DPA $\A$ in which one priority less can be used if the don't care set is $\Sigma^*a^\omega$.

\begin{figure}[t]
  \begin{minipage}{.56\textwidth}
\begin{tikzpicture}[node distance=3cm,thick,scale=0.7, every node/.style={scale=0.7}]
  \node(q0)[state with output, initial,initial text={}] at (0,0)
    {$q_0$ \nodepart{lower} $1$};
  \node(q1)[state with output, right of=q0]
    {$q_1$ \nodepart{lower} $2$};
  \node(q2)[state with output, right of=q1]
    {$q_2$ \nodepart{lower} $3$};
  \node(q3)[state with output, right of=q2]
    {$q_2$ \nodepart{lower} $4$};

    \path[-stealth]
    (q0) edge [loop above] node{$a$} ()
    (q0) edge  node[above]{$b$} (q1)
    (q1) edge [bend left] node[above]{$b$} (q0)
    (q1) edge  node[above]{$a$} (q2)
    (q2) edge [bend left] node[above]{$a$} (q1)
    (q2) edge  node[above]{$b$} (q3)
    (q3) edge [bend right] node[above]{$a$} (q1)
    ;
\end{tikzpicture}
  \end{minipage}
  \begin{minipage}{.44\textwidth}\small
      With $\A = (\T,c)$ on the left, $D = \Sigma^*a^\omega$, and $\T' =\T$, we obtain
    \[
    \begin{array}{l}
  \F_0' = \{\{q_0,q_1,q_2,q_3\},\{q_1,q_2,q_3\},\{q_0,q_1\}\} \\
  \F_1' = \{\{q_0,q_1,q_2\}\}
\end{array}
\]
    \end{minipage}
\caption{Illustration of {\dcpriorityopt}. An optimal parity condition consistent with $(\F_0',\F_1')$ can redefine the priority of $q_0$ to $2$ and thus use one priority less.} \label{fig:prio-opt}
\end{figure}

\medskip
We show that this problem can be solved in polynomial time. In fact, we consider a more general problem, in which the transition system on which we compute the parity condition is a quotient of the given DPA $\A$. Formally, we define the problem of generalized parity optimization under don't cares, {\gendcpriorityopt}, as follows: Given  DPAs $\A,\D$ with  $\A = (\T,c)$ and $D := L(\D)$, a deterministic transition system $\T'$, and a surjective homomorphism $h:\T \rightarrow \T'$, decide if there is a parity condition $c'$ on $\T'$ such that $L(\T',c')  \equiv_D L(\T,c)$, and construct such a $c'$ with the smallest possible number of priorities if one exists.

\begin{theorem} \label{thm:prio-opt}
The problem {\gendcpriorityopt} can be solved in polynomial time.
\end{theorem}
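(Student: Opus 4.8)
The plan is to recast the problem as one about how parity conditions classify the \emph{loops} of $\T'$ (a loop being the infinity set of a run, equivalently a nonempty internally strongly connected set of states) and to solve it by a recursion over the strongly connected components of $\T'$ in the style of the Carton--Maceiras priority-minimization algorithm \cite{CartonM99}, using $\D$ only through a few polynomial-size nonemptiness tests. Since the run of a word in $\T'$ is the $h$-image of its run in $\T$, for every loop $R$ of $\T$ the set $h(R)$ is a loop of $\T'$, and every loop of $\T'$ arises this way. Up to equivalence a parity condition $c'$ on $\T'$ is determined by the family $\mathcal{F}$ of loops it accepts; $\mathcal{F}$ is realizable by a parity condition exactly when $\mathcal{F}$ and its complement in the set of loops are both closed under those unions of loops that are again loops, and when $\mathcal{F}$ is presented suitably a realizing condition with the fewest priorities is obtained by a Carton--Maceiras peeling. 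Moreover $L(\T',c')\equiv_D L(\T,c)$ holds iff $c'$ puts $h(R)$ into $\mathcal{F}$ exactly when $\max c(R)$ is even, for every loop $R$ of $\T$ that is the infinity set of a run on some word outside $D$; call such a loop $h(R)$ \emph{forced accepting} if $\max c(R)$ is even and \emph{forced rejecting} otherwise. The task becomes: complete this partial ``forced'' classification of the loops of $\T'$ to a realizable one using as few priorities as possible, or recognize that no realizable completion exists.

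There are exponentially many loops, but every query about the forced classification that the algorithm needs is a nonemptiness test on a polynomial-size automaton. For $Y\subseteq Q_{\T'}$, questions such as ``does $\T'$ restricted to $Y$ contain a forced-accepting loop?'', ``\dots a forced-rejecting loop through a prescribed state?'', or ``\dots a forced-accepting loop whose $h$-image is all of $Y$?'' reduce to nonemptiness of the product of $\T$ restricted to $h^{-1}(Y)$ with a DPA $\comp{\D}$ for $\overline{D}$ (obtained from $\D$ by adding one to all priorities), under the condition ``the maximal priority of the $\T$-component is even (for the rejecting variant: odd) \emph{and} the maximal priority of the $\comp{\D}$-component is even'', possibly conjoined with a (generalized) B\"uchi requirement encoding ``through $q$'' or ``$h$-image covers $Y$''. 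A conjunction of parity, B\"uchi and generalized-B\"uchi conditions is a Streett condition, and nonemptiness of a Streett automaton is decidable in polynomial time; an accepting loop of the product is precisely a loop of $\T$ of the desired kind together with a witnessing word outside $D$, and since all states of $\T$ are reachable every such loop is realized by a concrete run, so no extra reachability analysis is needed.

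With these oracles the minimal $c'$ is built by a recursion that alternately fixes the parity of the current top priority. Each MSCC $M'$ of $\T'$ is handled independently (transient states later get the globally smallest priority). The oracle decides whether $M'$ is forced accepting, forced rejecting, both (then no $c'$ exists), or neither; in the last case both parities are tried and the cheaper result kept. Once a required status $r$ is fixed for an MSCC $N$: if $N$ has no forced loop of the status opposite to $r$, give all of $N$ a single priority of parity $r$; otherwise assign the current top priority (of parity $r$) to $Z:=N\setminus\bigcup\{\,\text{forced loops of }N\text{ of the opposite status}\,\}$, report infeasibility if $Z=\emptyset$, and recurse on each MSCC $W$ of $N\setminus Z$ with required status the opposite of $r$ -- correct because such $W$ is a union of forced loops of that status and hence has that status in any realizable completion. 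The number of priorities on $M'$ is one plus the maximum over the recursive calls; combining the MSCCs uses the standard bookkeeping of \cite{CartonM99}, costing at most one more priority. Every step deletes a nonempty set of states, the sub-structures at a fixed recursion depth are pairwise disjoint, the depth is at most $|Q_{\T'}|$, and branching occurs at most once per MSCC of $\T'$; with the polynomial oracle this gives a polynomial-time procedure, and infeasibility is reported exactly when some MSCC is forced both ways or some $Z$ is empty.

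The heart of the proof is the correctness and optimality of this recursion. Soundness requires checking that the resulting condition realizes a loop family that -- together with its complement -- is closed under loop-unions and that agrees with $(\T,c)$ on all words outside $D$; the delicate point is that a loop may be forced ``by closure'' -- a union of directly forced loops of one status -- without being directly forced, which is exactly why the recursion must propagate required statuses down to the components $W$ rather than re-querying the oracle there. Optimality needs a matching lower bound: from a run of the recursion one extracts a strictly alternating inclusion-chain of loops of $\T'$ of length equal to the number of priorities produced, and argues that every loop on it is unavoidable, being either directly forced or forced by closure from forced loops nested inside it; this is packaged as an exchange argument rewriting an arbitrary $D$-equivalent parity condition on $\T'$ into the one the algorithm outputs without increasing the priority count. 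Keeping the ``forced versus forced-by-closure'' bookkeeping straight through the recursion is where I expect the main difficulty to lie.
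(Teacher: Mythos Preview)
Your approach is correct, but the paper takes a shorter route. Both proofs share the same reformulation: a parity condition $c'$ on $\T'$ is $D$-equivalent to $(\T,c)$ iff it is consistent with the pair $(\F_0',\F_1')$ of ``forced-accepting'' and ``forced-rejecting'' loops of $\T'$, and both implement the same oracle---your query ``compute $\bigcup\{\text{forced loops of the opposite status inside }N\}$'' is exactly the paper's \emph{Subset-Parity-Oracle} $U_t(N)$, realized via the product $\A\times\D$ and an enumeration of priority pairs $(i,j)$ (your Streett-nonemptiness test is an equivalent formulation). The difference is that the paper then invokes an existing black box, \cite[Theorem~4]{BohnL21}, which already solves the problem ``given oracle access to $(\F_0',\F_1')$, compute an optimal consistent parity condition or report that none exists'', so Lemmas~\ref{lem:consistent-parity} and~\ref{lem:poly-subset-priority-oracle} together with that citation finish the proof in a few lines. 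You instead re-derive a Carton--Maceiras-style peeling adapted to partial classifications, which is precisely the content of the cited result; this is more self-contained but forces you to carry out the optimality argument (the alternating-chain lower bound and the ``forced-by-closure'' bookkeeping) that the paper delegates to \cite{BohnL21}. Your sketch of that argument is on the right track and the places you flag as delicate are indeed the ones that need care, but none of it is needed once the citation is available.
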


The proof of this theorem follows from Lemma~\ref{lem:consistent-parity}, Lemma~\ref{lem:poly-subset-priority-oracle}, and Theorem~\ref{thm:subset-priority-oracle} below. We start with some definitions.

\medskip
Let $\T = (\Sigma,Q,\delta,q_0)$ and $\T' = (\Sigma,Q',\delta',q_0')$, and for $P \subseteq Q$ define $h(P) = \{h(q) \mid q \in P\}$. From the DPA $\A$, the don't care set $D$, and the homomorphism $h$, we obtain two families $(\F_0',\F_1')$ of subsets of $Q'$, those that need to be accepting, and those that need to be rejecting in order to accept a language that is $D$-equivalent to $L(\A)$. Formally, we define
\[
\begin{array}{l}
  \F_0' = \{R' \subseteq Q' \mid \exists \alpha \in L(\A) \setminus D\;:\; h(\infs[\A](\alpha)) = R'\} \\
  \F_1' = \{R' \subseteq Q' \mid \exists \alpha \in \comp{L(\A)} \setminus D\;:\; h(\infs[\A](\alpha)) = R'\}
\end{array}
\]
We say that a parity condition $c':Q' \rightarrow \nat$ is consistent with $(\F_0',\F_1')$ if $\max(c'(R'))$ is even for all $R' \in \F_0'$, and odd for  all $R' \in \F_1'$. See Figure~\ref{fig:prio-opt} for an example.
\begin{lemma} \label{lem:consistent-parity}
With the notations introduced above, $L(\T',c')  \equiv_D L(\T,c)$ iff $c'$ is consistent with $(\F_0',\F_1')$.
\end{lemma}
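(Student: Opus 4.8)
The plan is to reduce the statement to one elementary observation about how the homomorphism $h$ acts on runs, and then unwind the definitions. The key fact I would establish first is: for every $\alpha \in \Sigma^\omega$, the infinity set of the run of $\alpha$ in $\T'$ equals $h(\infs[\A](\alpha))$. This follows by an easy induction. Let $\rho$ be the run of $\alpha$ in $\T$ and $\rho'$ the run of $\alpha$ in $\T'$. Then $\rho'(0) = q_0' = h(q_0) = h(\rho(0))$, and $\rho'(i+1) = \delta'(\rho'(i),\alpha(i)) = \delta'(h(\rho(i)),\alpha(i)) = h(\delta(\rho(i),\alpha(i))) = h(\rho(i+1))$, using that $h$ is a homomorphism; hence $\rho' = h \circ \rho$. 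Since $Q$ is finite, there is some $N$ with $\rho(i) \in \infs[\A](\alpha)$ for all $i \geq N$, which gives $\mathit{Inf}(\rho') \subseteq h(\infs[\A](\alpha))$; the reverse inclusion is immediate, since any state visited infinitely often by $\rho$ has its $h$-image visited infinitely often by $\rho'$.

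With this in hand, both languages are described by the same infinity data: $\alpha \in L(\T,c)$ iff $\max(c(\infs[\A](\alpha)))$ is even, and $\alpha \in L(\T',c')$ iff $\max(c'(h(\infs[\A](\alpha))))$ is even.

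For the ``if'' direction, assume $c'$ is consistent with $(\F_0',\F_1')$ and fix $\alpha \notin D$; set $R' = h(\infs[\A](\alpha))$. If $\alpha \in L(\A)$, then $\alpha \in L(\A) \setminus D$, so $R' \in \F_0'$, hence $\max(c'(R'))$ is even and $\alpha \in L(\T',c')$. If $\alpha \notin L(\A)$, then $\alpha \in \comp{L(\A)} \setminus D$, so $R' \in \F_1'$, hence $\max(c'(R'))$ is odd and $\alpha \notin L(\T',c')$. Either way $\alpha \in L(\T',c') \iff \alpha \in L(\T,c)$, which is precisely $L(\T',c') \equiv_D L(\T,c)$. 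For the ``only if'' direction, assume $L(\T',c') \equiv_D L(\T,c)$. Given $R' \in \F_0'$, there is by definition some $\alpha \in L(\A) \setminus D$ with $h(\infs[\A](\alpha)) = R'$; since $\alpha \notin D$, the $D$-equivalence forces $\alpha \in L(\T',c')$, i.e. $\max(c'(R'))$ is even. Symmetrically, given $R' \in \F_1'$ there is some $\alpha \in \comp{L(\A)} \setminus D$ witnessing $R'$, and $D$-equivalence forces $\alpha \notin L(\T',c')$, i.e. $\max(c'(R'))$ is odd. Hence $c'$ is consistent with $(\F_0',\F_1')$.

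I do not expect a genuine obstacle here. The only points needing a little care are the finiteness argument identifying $\mathit{Inf}$ of the quotient run with the $h$-image of $\infs[\A](\alpha)$, and the bookkeeping that every $\alpha \notin D$ lies in exactly one of $L(\A) \setminus D$ or $\comp{L(\A)} \setminus D$, so that $\F_0'$ and $\F_1'$ jointly constrain all infinity sets that are reachable by runs on words outside $D$.
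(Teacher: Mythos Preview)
Your proposal is correct and follows essentially the same approach as the paper: both use the key fact that $\infs[\T'](\alpha) = h(\infs[\A](\alpha))$ (which the paper states in one line as a consequence of $h$ being a homomorphism, while you spell out the induction $\rho' = h \circ \rho$ and the finiteness argument), and then unwind the definitions of $D$-equivalence and consistency. The paper compresses both directions into a single sentence, whereas you treat them separately, but the content is the same.
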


\begin{proof}
  The proof is immediate from the definitions: Let $\alpha \in \Sigma^\omega \setminus D$. Since $h$ is a homomorphism, $\infs[\T'](\alpha) = h(\infs(\alpha))$. Thus, $\infs[\T'](\alpha) \in \F_0'$ if $\alpha \in L(\A)$, and $\infs[\T'](\alpha) \in \F_1'$ if $\alpha \in \comp{L(\A)}$. So $L(\T',c')  \equiv_D L(\T,c)$ iff $(\T',c')$ accepts precisely those words in $\Sigma^\omega \setminus D$
with infinity set in $\F_0'$, which is the definition of consistency with $(\F_0',\F_1')$.
\end{proof}
So for solving the problem {\gendcpriorityopt}, we need to find an optimal parity condition $c'$ that is consistent with $(\F_0',\F_1')$. There is an algorithm that solves this problem in polynomial time, which is presented in \cite{BohnL21}. Unfortunately, the size of $(\F_0',\F_1')$ can be exponential in general, so we cannot construct it explicitly in a polynomial time algorithm. However, the algorithm from \cite{BohnL21} only needs to access $(\F_0',\F_1')$ via queries of the following type: Given a set $P' \subseteq Q'$ and $t \in \{0,1\}$, return the set
\[
U_t(P') = \bigcup \{R' \subseteq P' \mid R' \in \F_t'\},
\]
that is, the union of all sets $R' \in \F_t'$ with $R' \subseteq P'$. Let us call such a function that returns $U_t(P')$  a \textit{Subset-Parity-Oracle} for $(\F_0',\F_1')$. We can thus reformulate the result from \cite{BohnL21} as follows.

\begin{theorem}[{\cite[Theorem 4]{BohnL21}}] \label{thm:subset-priority-oracle}
There is a polynomial time algorithm that, given a \textit{Subset-Parity-Oracle} for $(\F_0',\F_1')$ that is computable in polynomial time, constructs an optimal parity condition $c'$ on $\T'$ that is consistent with $(\F_0',\F_1')$ if one exists.
\end{theorem}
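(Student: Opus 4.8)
The plan is to realise the cited algorithm as a greedy ``layer-peeling'' procedure --- in spirit the classical computation of the minimal number of priorities of a parity automaton \cite{CartonM99}, but driven entirely through the \textit{Subset-Parity-Oracle} rather than through an already-given priority assignment. Two simplifications come first. (i) Every consistent parity condition may be assumed to use a contiguous block of priorities $\{i,i+1,\dots,j\}$; up to the irrelevant offset there are then only two ``shapes'', one in which the top priority $j$ is even and one in which it is odd, and the quantity to be minimised is $j-i+1$. I would solve the two symmetric variants, ``top even'' and ``top odd'', and return the smaller output. (ii) The top priority class is essentially forced: in any consistent condition whose maximal priority $m$ is even, no state of any $R'\in\F_1'$ can carry priority $m$ (otherwise that set would have an even maximum), so only the states of $Q'\setminus U_1(Q')$ are eligible; moreover giving priority $m$ to \emph{all} of $S_0:=Q'\setminus U_1(Q')$ is without loss of generality, by an exchange argument (inflating the top class never violates consistency and introduces no new priority value). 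Every $R'\in\F_0'$ that meets $S_0$ is then satisfied, and the remaining constraints form precisely the ``top odd'' problem on the strictly smaller state set $U_1(Q')$.

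Iterating this step yields the descending chain $P_0=Q'\supseteq P_1=U_1(P_0)\supseteq P_2=U_0(P_1)\supseteq P_3=U_1(P_2)\supseteq\cdots$, which stabilises within $2|Q'|$ steps at a set $P_\infty$ with $U_0(P_\infty)=U_1(P_\infty)=P_\infty$. If $P_\infty=\emptyset$, one assigns to every non-empty layer $P_k\setminus P_{k+1}$ a priority of its prescribed parity, using consecutive integers and compressing the empty layers; a direct check shows the resulting $c'$ is consistent with $(\F_0',\F_1')$, and by the ``forced top class'' observation together with induction on $|Q'|$ it uses the fewest priorities possible among conditions of its shape. If $P_\infty\neq\emptyset$, then no consistent condition exists at all: $P_\infty$ is a non-empty set that is at once the union of the $\F_0'$-sets it contains and the union of the $\F_1'$-sets it contains, so the highest-priority state of $P_\infty$ under any candidate $c'$ would simultaneously realise the maximum of an $\F_0'$-set and of an $\F_1'$-set, which is impossible; and this defect shows up in the ``top even'' run exactly when it shows up in the ``top odd'' run, so the algorithm may report ``no solution'' as soon as one run fails to exhaust $Q'$. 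Taking the better of the two valid outputs gives an optimal consistent parity condition, which is Theorem~\ref{thm:subset-priority-oracle}. For the complexity, each run performs at most $2|Q'|+1$ oracle queries and otherwise only set operations over subsets of $Q'$, so the procedure runs in polynomial time whenever the oracle does.

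I expect the genuine work to lie in the optimality proof --- that greedily maximising each peeled layer is optimal for its shape. The exchange argument gives ``maximising the top class is free'', but one must also check that inflating it never forces a monochromatic maximum further down, and then run an induction that is sensitive to whether the residual instance actually uses the next priority value or skips it. It is exactly this ``skip'' possibility that makes the priority counts of the two shapes differ by at most one, and hence why both variants must be tried: for instance, with $Q'=\{a,b,c,d\}$, $\F_0'=\{\{a\},\{b\}\}$, and $\F_1'=\{\{b,c,d\}\}$, the ``top even'' run peels $U_1(Q')=\{b,c,d\}$ first and produces three priorities, while the ``top odd'' run peels $U_0(Q')=\{a,b\}$ first and produces only two. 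The bookkeeping needed to count priorities correctly across empty and skipped layers is routine once this is in place.
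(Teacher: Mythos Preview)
The paper does not prove Theorem~\ref{thm:subset-priority-oracle}; it is quoted from \cite{BohnL21} and used as a black box in the derivation of Theorem~\ref{thm:prio-opt}. There is therefore no proof in the present paper to compare your proposal against.

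Your reconstruction is nonetheless the right argument --- a Carton--Maceiras/Zielonka-style layer peeling, abstracted so that it runs entirely through the oracle --- and it is correct in outline. One point that would tighten the optimality induction: the reason the alternation $U_1, U_0, U_1, \ldots$ is \emph{forced} (rather than the recursion possibly preferring the same parity again) is that each $U_t$ is idempotent in the relevant sense, namely $U_t(U_t(P)) = U_t(P)$. Concretely, with $P_1 = U_1(Q')$ every state of $P_1$ lies in some $R\in\F_1'$ with $R\subseteq P_1$, so no consistent condition on $P_1$ can have an even top priority; hence the optimum on $P_1$ \emph{is} its ``top odd'' optimum, and the recurrence $k_{\mathrm{even}}(Q') = 1 + k_{\mathrm{odd}}(P_1)$ is exact rather than an inequality. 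The same observation shows that no intermediate empty layer can occur in either chain --- only possibly the very first one --- so your ``compression of empty layers'' collapses to dropping an empty first layer, which is already handled by trying both shapes. With this made explicit, the induction on $|Q'|$ and the non-existence argument via $P_\infty$ go through exactly as you sketch.
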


Thus, for a polynomial time algorithm solving {\gendcpriorityopt}, it suffices to show that one can implement a \textit{Subset-Parity-Oracle} for $(\F_0',\F_1')$ in polynomial time, which is stated in the following lemma.
\begin{lemma} \label{lem:poly-subset-priority-oracle}
There is a polynomial time algorithm that, given the inputs $\A = (\T,c),\D,\T',h$ for the problem {\gendcpriorityopt}, a set $P'$ of states of $\T'$, and $t \in \{0,1\}$, returns the set $U_t(P')$.
\end{lemma}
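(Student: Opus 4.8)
The plan is to compute $U_t(P')$ not by enumerating the (potentially exponentially many) sets $R' \in \F_t'$, but by a direct reachability/SCC analysis in a product automaton. First I would form the product transition system $\T \times \D$, so that a single run on an $\omega$-word $\alpha$ simultaneously tracks the state of $\A$ (hence of $\T'$, via $h$) and the state of $\D$, and the parity acceptance of this product (using the priorities of $\A$ in one component and of $\D$ in the other) lets us decide membership in $L(\A)$, in $\comp{L(\A)}$, in $D$, and in $\comp D$. The key observation is that a set $R' \subseteq P'$ lies in $\F_t'$ exactly when there is an ultimately periodic $\alpha$ with $h(\infs[\A](\alpha)) = R'$, with $\alpha \in L(\A)$ if $t=0$ (resp. $\alpha \in \comp{L(\A)}$ if $t=1$), and $\alpha \notin D$. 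Since acceptance of all conditions involved only depends on the infinity set of the product run, this amounts to: there is an MSCC $C$ of the product system, reachable from the initial state, such that (i) the projection of $C$ to the $\T$-component, pushed through $h$, equals $R'$; (ii) the projection of $C$ to the $\T$-component is contained in the preimage $h^{-1}(P')$ (so that $R' \subseteq P'$); and (iii) the maximal $\A$-priority occurring in $C$ has the parity corresponding to $t$, while the maximal $\D$-priority occurring in $C$ is odd (so $\alpha \notin D$).

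Next I would observe that $U_t(P')$ is the union of all such $R'$, so it suffices to range over all subsets $C$ of MSCCs satisfying (ii) and (iii) and take the union of their $h$-projections; and here the crucial simplification is that it is enough to consider, for each reachable product state in $h^{-1}(P') \times Q_\D$, the set of states reachable from it while staying inside $h^{-1}(P') \times Q_\D$ and returning to it — i.e. the strongly connected "closure" of that state within the restricted product — because any $R'$ witnessed by a small SCC is also witnessed (as a subset, hence still contributing to the union) by a larger strongly connected set, as long as the parity side-conditions can be met. Concretely: restrict the product to states whose $\T$-component lies in $h^{-1}(P')$; for each MSCC $C$ of this restricted product that is reachable from the initial state in the \emph{full} product, check whether one can choose a sub-SCC of $C$ on which the maximal $\A$-priority has parity $t$ and the maximal $\D$-priority is odd — this is a standard polynomial check, e.g. by examining, for each candidate pair of maximal priorities, the subgraph of $C$ obtained by deleting states of higher $\A$-priority and the relevant $\D$-priorities and testing strong connectivity; and accumulate $h$ of the $\T$-projection of every such admissible sub-SCC into the answer. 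All of these are polynomial-time graph operations (SCC decomposition, reachability, priority thresholding), and the product has size polynomial in $|\A| \cdot |\D|$, so the whole procedure runs in polynomial time.

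The main obstacle I anticipate is the interaction of the two parity conditions (that of $\A$ determining $t$-membership and that of $\D$ determining don't-care membership) within a single SCC: we cannot simply take the largest available SCC, because forcing the $\D$-priority to be odd (to guarantee $\alpha \notin D$) may require excluding certain states, and simultaneously forcing the $\A$-priority to have parity $t$ may pull in a different direction. The resolution is that we do not need a single SCC realizing all of $U_t(P')$ at once; we only need, for each state $R'$ we wish to include in the union, \emph{some} admissible sub-SCC whose $h$-projection contains the relevant states, and we are free to take the union over many different sub-SCCs and many different MSCCs. So the combinatorial search decomposes cleanly: for each MSCC $C$ of the restricted product, iterate over the (polynomially many) choices of "target maximal $\A$-priority of the correct parity" and "target maximal $\D$-priority that is odd", delete the states that would exceed these targets, and add the $h$-image of whatever strongly connected pieces survive. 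Correctness then follows from Lemma~\ref{lem:consistent-parity}'s underlying principle that membership in $L(\A)$, $\comp{L(\A)}$, and $D$ is determined entirely by infinity sets, together with the fact that every infinity set of an $\omega$-word run is an SCC and every SCC is the infinity set of some ultimately periodic word; I would spell these two directions out as the two inclusions proving $U_t(P')$ equals the set computed by the algorithm.
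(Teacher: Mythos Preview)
Your approach is the paper's: build the product $\hat\A = \A \times \D$, lift $h$ to it, observe that the sets $R' \in \F_t'$ with $R' \subseteq P'$ are exactly the $h$-images of reachable strongly connected sets $S$ in $\hat\A$ with $h(S) \subseteq P'$, $\max \hat c(S)$ of parity $t$, and $\max \hat d(S)$ odd, and then compute the union by iterating over priority-threshold pairs $(i,j)$ and collecting MSCCs of the correspondingly restricted graph.

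One step needs tightening. In your last paragraph you write ``delete the states that would exceed these targets, and add the $h$-image of whatever strongly connected pieces survive.'' Taken literally this is wrong: after thresholding at, say, an even $i$ and odd $j$ for $t=0$, a surviving SCC $S$ may have $\max \hat c(S)$ strictly below $i$ and \emph{odd}, in which case $S$ witnesses a set in $\F_1'$, not $\F_0'$, and $h(S)$ must not be added to $U_0(P')$. The paper (and your own earlier phrase ``admissible sub-SCC'') handles this by only accumulating those MSCCs $S$ of the thresholded graph with $\max \hat c(S) = i$ and $\max \hat d(S) = j$. With that filter reinstated, your argument and the paper's coincide.
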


\begin{proof}
Let $\A=(\T,c)$ with $\T = (\Sigma,Q,\delta,q_0)$, $\D= (\Sigma,Q_\D,\delta_\D,q_0^\D,d)$, and $\T' = (\Sigma,Q',\delta',q_0')$.
We consider the case $t = 0$, the case $t=1$ being symmetric. Let $R' \subseteq P'$ with $R' \in \F_0'$. By definition of $\F_0'$, there is $\alpha \in L(\A) \setminus D$ with  $h(\infs[\A](\alpha)) = R'$. For the condition $\alpha \in L(\A) \setminus D$, we consider the product of $\A$ and $\D$, which is obtained by the standard product construction, restricted to reachable states, and with the two priority functions $c,d$ extended to pairs of states. Formally we define $\hat{\A} = \A \times \D = (\Sigma, \hat{Q}, \hat{\delta},\hat{q_0},\hat{c},\hat{d})$ with
\begin{itemize}
\item $\hat{q_0} = (q_0,q_0^\D)$,
\item $\hat{Q} = Q \times Q_\D$ restricted to the states reachable from $\hat{q_0}$ by application of $\hat{\delta}$,
\item $\hat{\delta}((q_1,q_2),a) = (\delta(q_1,a),\delta_\D(q_2,a))$
\item $\hat{c}(q_1,q_2) = c(q_1)$ and $\hat{d}(q_1,q_2) = d(q_2)$
\end{itemize}
Then $\alpha \in L(\A) \setminus D$ is equivalent to $\max(\hat{c}(\infs[\hat{\A}](\alpha)))$ being even and $\max(\hat{d}(\infs[\hat{\A}](\alpha)))$ being odd. This means that the set $\infs[\hat{\A}](\alpha)$ is a strongly connected set in $\hat{\A}$ on which the maximal $c$-priority is even, and the maximal $d$-priority is odd.

We can view $h$ also as a homomorphism from the product transition system $\hat{\A}$ to $\T'$ by setting $h(q_1,q_2) = h(q_1)$ for all $(q_1,q_2) \in \hat{Q}$. The condition $h(\infs[\A](\alpha)) = R'$ then directly translates to $\hat{\A}$ as $h(\infs[\hat{\A}](\alpha)) = R'$.

In summary, the sets $R' \in \F_0$ with $R' \subseteq P'$ are precisely those sets $h(S)$ where $S$ is a strongly connected subset of states in $\hat{\A}$ with $h(S) \subseteq P'$, $\max(\hat{c}(S))$ even, and $\max(\hat{d}(S))$ odd.

\medskip
Hence, for computing $U_0(P')$, we can proceed as follows.
\begin{itemize}
\item $U_0 := \emptyset$
\item For all even $i$ in the range of $c$, and all odd $j$ in the range of $d$:
  \begin{itemize}
  \item Let $\hat{\A}_{P',i,j}$ be $\hat{\A}$ restricted to the states $(q_1,q_2) \in \hat{Q}$ with $h(q_1) \in P'$, $c(q_1) \le i$, and $d(q_2) \le j$.
  \item For each MSCC $S$ of $\hat{\A}_{P',i,j}$ with $\max(\hat{c}(S)) = i$ and $\max(\hat{d}(S)) = j$, let $U_0 := U_0 \cup h(S)$.
  \end{itemize}
\item Return $U_0$
\end{itemize}
This procedure can be implemented to run in polynomial time because there are only polynomially many pairs $i,j$ to consider, the product $\hat{\A}$ and its restrictions $\hat{\A}_{P',i,j}$ as well as the MSCCs can all be computed in polynomial time.
\end{proof}

\section{State space minimization}\label{sec:state-minimization}

In this section we consider minimization problems for deterministic $\omega$-automata with informative right-congruence under a given don't care set. As explained in the introduction,
we only consider don't care sets $D$ with trivial right-congruence.
We start with the observation that for such don't care sets, one can define a canonical right-congruence as a generalization of the standard Myhill-Nerode congruence.

In the following, if not stated otherwise, let $L \subseteq \Sigma^\omega$ be a regular $\omega$-language and $D \subseteq \Sigma^\omega$ a regular $\omega$-language of don't care words with trivial right-congruence.

\begin{definition}[$D$-congruence]\label{def:d-con}
	Given $w,w' \in \Sigma^*$, we say that $w$ and $w'$ are $D$-equivalent for $L$, denoted as $w \sim_{L,D} w'$, if and only if for all $\alpha \in \Sigma^\omega \setminus D$ it holds that $w\alpha \in L \iff w'\alpha \in L$. Note that $\sim_{L,\emptyset}$ is the same as $\sim_L$.
\end{definition}
Clearly, $w \sim_{L} w'$ implies $w \sim_{L,D} w'$ for any don't care set $D$. The opposite is obviously not true because for $D=\Sigma^\omega$, all finite words are $D$-equivalent for every $L$.
If $D$ has trivial right-congruence, it follows that $\alpha \in D \iff w\alpha \in D$ for all $\alpha \in \Sigma^\omega$ and $w \in \Sigma^*$. One can conclude that $\sim_{L,D}$ is a right-congruence, as stated in the following proposition.

\begin{proposition}\label{peq}
 For $L,D \subseteq \Sigma^\omega$ such that $D$ has trivial right-congruence, $\sim_{L,D}$ is a right-congruence.
\end{proposition}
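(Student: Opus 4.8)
The plan is to verify the two defining properties of a right-congruence for $\sim_{L,D}$: that it is an equivalence relation, and that it is right-invariant, i.e. $w \sim_{L,D} w'$ implies $w\sigma \sim_{L,D} w'\sigma$ for every $\sigma \in \Sigma$. The equivalence-relation part (reflexivity, symmetry, transitivity) is immediate from the definition, since $\sim_{L,D}$ is defined by a biconditional quantified over all $\alpha \in \Sigma^\omega \setminus D$, so I would dispatch it in one sentence. The real content is right-invariance, and here the hypothesis that $D$ has trivial right-congruence is exactly what makes it work.

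First I would record the key consequence of $D$ having trivial right-congruence, which is already noted in the text: for all $w \in \Sigma^*$ and $\alpha \in \Sigma^\omega$ we have $\alpha \in D \iff w\alpha \in D$. This holds because, with only one class in the right-congruence of $D$, every finite word $w$ is $\sim_D$-equivalent to $\epsilon$, and $\epsilon$ leads $D$ to ``accept'' $\alpha$ exactly when $D$ accepts $w\alpha$ when read from the $w$-class. Equivalently, membership in $D$ depends only on a suffix, so prepending $w$ cannot change it. In particular, for a fixed $\sigma \in \Sigma$, the map $\alpha \mapsto \sigma\alpha$ is a bijection between $\Sigma^\omega \setminus D$ and $\{\sigma\alpha \mid \alpha \in \Sigma^\omega \setminus D\} \subseteq \Sigma^\omega \setminus D$; more simply, $\sigma\alpha \notin D \iff \alpha \notin D$.

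Then for right-invariance: suppose $w \sim_{L,D} w'$ and fix $\sigma \in \Sigma$. To show $w\sigma \sim_{L,D} w'\sigma$, take any $\alpha \in \Sigma^\omega \setminus D$; I must show $w\sigma\alpha \in L \iff w'\sigma\alpha \in L$. Set $\beta := \sigma\alpha$. By the previous paragraph, $\alpha \notin D$ implies $\beta = \sigma\alpha \notin D$, so $\beta \in \Sigma^\omega \setminus D$. Now apply the hypothesis $w \sim_{L,D} w'$ to the word $\beta$: this gives $w\beta \in L \iff w'\beta \in L$, i.e. $w\sigma\alpha \in L \iff w'\sigma\alpha \in L$, which is exactly what we needed. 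Hence $w\sigma \sim_{L,D} w'\sigma$, and $\sim_{L,D}$ is a right-congruence.

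The only genuine obstacle is making sure the trivial-right-congruence hypothesis is invoked correctly — specifically that it yields $\alpha \in D \iff w\alpha \in D$, and that this is precisely the fact needed to transport a don't-care word $\alpha$ across the single letter $\sigma$. Everything else is bookkeeping. Since the excerpt already states the implication ``$D$ has trivial right-congruence $\Rightarrow$ ($\alpha \in D \iff w\alpha \in D$)'' in the text just before the proposition, I would feel free to cite it rather than re-derive it, keeping the proof to a few lines.
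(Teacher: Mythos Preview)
Your proof is correct and follows essentially the same approach as the paper: dispatch the equivalence-relation part as immediate, then for right-invariance use the trivial right-congruence of $D$ to conclude that $\alpha \notin D$ implies $\sigma\alpha \notin D$, and apply the hypothesis $w \sim_{L,D} w'$ to $\sigma\alpha$. The paper's version is just a terser rendering of the same argument.
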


\begin{proof}
That $\sim_{L,D}$ is an equivalence relation directly follows from the definition and does not require $D$ to have trivial right-congruence. Now assume that $u \sim_{L,D} v$ and let $\sigma \in \Sigma$. For every $\alpha \in \Sigma^\omega \setminus D$ we also have $\sigma \alpha \in \Sigma^\omega \setminus D$ since $D$ has trivial right-congruence. Hence $u\sigma\alpha \in L \Leftrightarrow v\sigma\alpha \in L$.
\end{proof}

\begin{definition}[Informative $D$-congruence]\label{def:inf-d-con}
Let $\TLD$ be the transition system associated with $\simLD$.
	A deterministic $\omega$-automaton $\mathcal{A}$ with $L:=L(\mathcal{A})$ possesses informative $D$-congruence if its transition system is isomorphic to $\TLD$. We also define $\IB(D)$ as the class of those $\omega$-languages $L$ for which there is a DBA $\A$ with informative $D$-congruence such that $L(\A) \equiv_D L$. The corresponding definitions for DCA, DPA, and DMA define the classes $\IC(D),\IP(D),\IM(D)$. Note that for $D=\emptyset$ we obtain the definitions of informative right-congruence.
\end{definition}

If $\A$ is a DMA with $L(\A) \equiv_D L$, then clearly two words $w,w' \in \Sigma^*$ that lead to the same state are $\simLD$-equivalent. This leads to the observation that a DMA with informative $D$-congruence is $D$-minimal (since DPA, DBA etc.~can be viewed as DMA, this observation holds for all acceptance types considered in this paper). Thus, the classes $\IB(D), \IC(D),\IP(D),\IM(D)$ consist of those languages that have a $D$-minimal automaton with informative $D$-congruence (with the respective acceptance condition).

Based on the results from Section~\ref{sec:priority-optimization}, we can show that it is decidable in polynomial time whether a given DPA has a $D$-minimal automaton with informative $D$-congruence.
\begin{theorem} \label{thm:decide-IRCD}
For given DPAs $\A$ and $\D$ where $D = L(\D)$ has a trivial right-congruence, it is decidable in polynomial time whether $L(\A)$ is in $\IB(D),\IC(D)$, or $\IP(D)$, and a corresponding DBA, DCA, or DPA $\B$ with informative $D$-congruence and $L(\B) \equiv_D  L(\A)$ can be constructed in polynomial time if one exists.
\end{theorem}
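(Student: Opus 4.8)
The plan is to build the candidate transition system $\TLD$ directly from the input DPA $\A$ and the don't-care DPA $\D$, check whether it is a quotient of the transition system of $\A$, and then apply the priority-optimization machinery of Section~\ref{sec:priority-optimization} to decide whether a suitable parity (resp.\ Büchi, co-Büchi) condition exists on it. Concretely, first I would compute the congruence $\simLD$ and hence $\TLD$ in polynomial time. Since $D$ has trivial right-congruence, Proposition~\ref{peq} tells us $\simLD$ is a right-congruence, so $\TLD$ is well-defined; to compute it, observe that $w \simLD w'$ fails exactly when there is $\alpha \notin D$ with $w\alpha \in L \not\Leftrightarrow w'\alpha \in L$. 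Working in the product $\A \times \A \times \D$ (tracking a state of $\A$ reached by $w$, a state reached by $w'$, and a state of $\D$), the pair of $\A$-states $(p,q)$ reachable by a common word is distinguishable iff from $(p,q,q_0^\D)$ one can reach a strongly connected set that is accepting in the first $\A$-copy, rejecting in the second (or vice versa), and rejecting in $\D$; this is a standard reachability-of-good-MSCC computation and runs in polynomial time. Merging the indistinguishable $\A$-states yields $\TLD$, and this also produces the surjective homomorphism $h\colon \T \to \TLD$ for free.

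The second step is the decision itself. By the remark following Definition~\ref{def:inf-d-con}, $L(\A) \in \IP(D)$ iff there is a DPA on $\TLD$ whose language is $D$-equivalent to $L(\A)$; since $h\colon \T \to \TLD$ is the surjective homomorphism just constructed, this is precisely an instance of {\gendcpriorityopt} with the roles $\A := (\T,c)$, $\D$, $\T' := \TLD$, $h$. Theorem~\ref{thm:prio-opt} says this is solvable in polynomial time: either it returns an optimal consistent parity condition $c'$ on $\TLD$, giving the desired DPA $\B = (\TLD, c')$ with informative $D$-congruence, or it reports that none exists, in which case $L(\A) \notin \IP(D)$. For the Büchi and co-Büchi cases, I would run the same {\gendcpriorityopt} procedure and then inspect the number (and parity) of priorities used by the optimal $c'$: a $D$-equivalent DBA on $\TLD$ exists iff the optimal consistent condition on $\TLD$ uses at most priorities $\{1,2\}$ (equivalently, $\F_1'$ contains no set that must dominate some set of $\F_0'$ in a way forcing three priorities), and dually for DCA with priorities $\{0,1\}$. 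Here one must be slightly careful: ``DBA with informative $D$-congruence accepting something $D$-equivalent to $L$'' is what $\IB(D)$ asks for, and since $\TLD$ is fixed, the existence question reduces exactly to whether $(\F_0',\F_1')$ admits a consistent Büchi (i.e.\ two-priority with max even) condition on $\TLD$ — which the optimal-priority output decides. One small additional check for the Büchi/co-Büchi classes: whether such a two-priority condition exists at all is equivalent to a consistency condition on $(\F_0',\F_1')$ that can be read off once the optimal parity condition is known, so no separate algorithm is needed.

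I expect the main obstacle to be the first step — correctly characterizing $\simLD$ and proving the reachability-based test computes exactly the merge, because the distinguishing witness $\alpha$ ranges over $\Sigma^\omega \setminus D$ rather than all of $\Sigma^\omega$, and one must verify that the relevant ``bad'' behaviours are witnessed by ultimately periodic $\alpha$ lying outside $D$. This is where triviality of the right-congruence of $D$ is essential: it guarantees that membership of $\alpha$ in $D$ is unaffected by any finite prefix, so $\alpha \notin D$ can be arranged by choosing the periodic part to force the $\D$-run into a rejecting MSCC, independently of where in $\A \times \A$ we start. A secondary subtlety, already flagged above, is making sure that ``there is a DPA on $\TLD$ with the right language'' is genuinely the same as the {\gendcpriorityopt} instance — this follows because any DMA (hence DPA) for a language $D$-equivalent to $L(\A)$ that has informative $D$-congruence must have transition system isomorphic to $\TLD$, and $\TLD$ is a quotient of $\T$ via the $h$ we built, so Lemma~\ref{lem:consistent-parity} applies verbatim. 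Once these two points are nailed down, the polynomial-time bound is immediate from Theorem~\ref{thm:prio-opt} and the polynomial cost of the product/MSCC computations in step one.
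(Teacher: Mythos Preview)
Your proposal is correct and follows essentially the same route as the paper: compute $\simLD$ on the states of $\A$ via the product $\A \times \A \times \D$ and an MSCC-based reachability test, form $\TLD$ together with the surjective homomorphism $h$, and then invoke Theorem~\ref{thm:prio-opt} ({\gendcpriorityopt}) to decide existence and, from the optimality of the returned condition, read off the Büchi/co-Büchi cases. One minor remark: the triviality of the right-congruence of $D$ is not actually needed in the distinguishability test (the $\D$-component is always started at $q_0^\D$ regardless of the $\A$-states being compared); its role is only to make $\simLD$ a right-congruence (Proposition~\ref{peq}) so that $\TLD$ is a well-defined transition system and a quotient of $\T$.
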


 \begin{proof}
  Let $L = L(\A)$. Given two states $p,q$ of $\A$, it can be decided in polynomial time whether $p \not\sim_{L,D} q$: One can build the product $\hat{\A}$ of $\A \times \A \times \D$, and then check if there is a strongly connected set $S$ in this product graph that is reachable from $(p,q,q_0^\D)$ (where $q_0^\D$ is the initial state of $\D$) and on which
  \begin{enumerate}[(a)]
  \item the maximal priority of states in the first component has different parity than the maximal priority of states in the second component (the resulting word is accepted from $p$ iff it is not accepted from~$q$),
  \item the maximal priority of states in the third component is odd (the resulting word is not a don't care word).
  \end{enumerate}
  An $\omega$-word that reaches $S$ from $(p,q,q_0^\D)$ and has $S$ as infinity set, is a word outside $D$ that witnesses the difference between $p$ and  $q$.

\medskip
  The existence of of such a set can be tested by first removing all states in $\hat{\A}$ that are not reachable from  $(p,q,q_0^\D)$, leading to $\hat{\A}'$. Then one iterates through all triples $(i,j,k)$ of priorities that satisfy properties (a) and (b). For each such triple one removes all states from $\hat{\A}'$ with priority triple $(i',j',k')$ and $i' > i$ or $j'>j$ or $k' > k$, and then computes the MSCCs in the remaining graph. If one finds an MSCC with a state that has priority triple $(i,j,k)$, then one has found a witness for $p \not\sim_{L,D} q$.
If this procedure does not find a witness for  $p \not\sim_{L,D} q$ for any of the considered triples $(i,j,k)$, then the states are equivalent.

  Once $\sim_{L,D}$ is computed, one can construct $\TLD$ and the corresponding homomorphism from the transition structure of $\A$ onto $\TLD$. By Theorem~\ref{thm:prio-opt} one can decide in polynomial time whether there is a parity condition on $\TLD$ such that the resulting language is $D$-equivalent to $L(\A)$.

  Since furthermore the constructed parity condition is optimal, one can also deduce whether there is a B\"uchi or co-B\"uchi condition on $\TLD$ that leads to a language that is $D$-equivalent to $L(\A)$.
\end{proof}

For $D=\emptyset$, this solves the minimization problem for IRC languages and DPAs, which has independently been obtained in \cite[Theorem 15.2]{ASF22} with a different proof.
\begin{corollary}
  Given a DPA $\A$, it is decidable in polynomial time whether $L(\A)$ is in $\IB,\IC$, or $\IP$, and a corresponding DBA, DCA, or DPA with informative right-congruence for $L(\A)$ can be constructed in polynomial time if one exists.
\end{corollary}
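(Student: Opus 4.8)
The plan is to instantiate Theorem~\ref{thm:decide-IRCD} with the don't care set $D = \emptyset$. First I would check that the hypotheses of that theorem are met: the empty language is regular and is accepted by a trivial one-state DPA $\D$ (a single state with an odd priority), and it has trivial right-congruence, since for $L = \emptyset$ we have $u \sim_L v$ for all $u,v \in \Sigma^*$ (no word lies in $\emptyset$, so the defining condition holds vacuously). Thus, fixing this $\D$, the theorem applies to an arbitrary input DPA $\A$.

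Next I would unwind the $D$-relativized notions at $D = \emptyset$. By the remark in Definition~\ref{def:d-con}, $\sim_{L,\emptyset}$ is the canonical right-congruence $\sim_L$, hence $\T_{L,\emptyset} = \T_{\sim_L}$ and ``informative $\emptyset$-congruence'' is precisely ``informative right-congruence''; correspondingly $\IB(\emptyset) = \IB$, $\IC(\emptyset) = \IC$, and $\IP(\emptyset) = \IP$, as already observed in Definition~\ref{def:inf-d-con}. Moreover $\equiv_\emptyset$ is plain language equality, so an automaton $\B$ with $L(\B) \equiv_\emptyset L(\A)$ is simply one recognizing $L(\A)$.

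Combining these observations, Theorem~\ref{thm:decide-IRCD} applied to $\A$ and the fixed $\D$ for $\emptyset$ decides in polynomial time whether $L(\A) \in \IB$, $\IC$, or $\IP$, and constructs a corresponding DBA, DCA, or DPA with informative right-congruence for $L(\A)$ whenever one exists. There is essentially no obstacle here: the only points to verify are the two bookkeeping facts above (that $\emptyset$ has trivial right-congruence and is DPA-recognizable, and that the $D$-relativized definitions degenerate correctly at $D = \emptyset$), both of which are immediate.
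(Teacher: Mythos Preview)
Your proposal is correct and takes essentially the same approach as the paper: the corollary is obtained directly from Theorem~\ref{thm:decide-IRCD} by instantiating $D=\emptyset$, and the paper records nothing beyond that observation.
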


We have seen that one can efficiently check if a language of a DPA has a $D$-minimal automaton with informative $D$-congruence. We now analyze the question for which language classes $D$-minimal automata are guaranteed to have informative $D$-congruence.
As a first observation, we can explain the result from \cite{EisingerK06} that for each WDBAs there is a unique $D$-minimal WDBA in terms of the $D$-congruence.
\begin{theorem}\label{t1}
	Given a WDBA $\mathcal{A}=(\Sigma,Q,\delta,q_0,F)$ with $L:=L(\mathcal{A})$ and a regular $\omega$-language of don't care words $D \subseteq \Sigma^\omega$ with trivial right-congruence, $\mathcal{A}$ is $D$-minimal if and only if it has informative $D$-congruence.
\end{theorem}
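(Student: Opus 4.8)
The plan is to prove the two directions separately. The ``if'' direction is the easy one: if $\mathcal{A}$ has informative $D$-congruence, then by the discussion right before the theorem (for any DMA, hence in particular for a WDBA, distinct reachable states are $\sim_{L,D}$-inequivalent), $\mathcal{A}$ is automatically $D$-minimal, since any $D$-equivalent automaton must have at least $\ind(\sim_{L,D})$ states. So I would just cite that observation and move on.

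For the ``only if'' direction, suppose $\mathcal{A}$ is a $D$-minimal WDBA; I want to show its transition system is isomorphic to $\T_{L,D}$. The natural route is contrapositive: assume the transition system of $\mathcal{A}$ is \emph{not} isomorphic to $\T_{L,D}$, i.e. there are two distinct reachable states $p \neq q$ of $\mathcal{A}$ with $p \sim_{L,D} q$ (using that two words reaching the same state are always $\sim_{L,D}$-equivalent, so the only way to fail IRC is to have too many states). From this I want to build a strictly smaller WDBA that is still $D$-equivalent to $L$, contradicting $D$-minimality. The idea is to merge $p$ and $q$ — but merging states in an $\omega$-automaton is delicate, because the acceptance condition and the SCC structure can break. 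The key leverage is the \emph{weak} structure: in a WDBA the acceptance of a word depends only on whether the (unique) MSCC in which the run stabilizes is accepting or rejecting, and this is a property determined, modulo $D$, by the $\sim_{L,D}$-class. So I would argue that $p \sim_{L,D} q$ forces $p$ and $q$ (and in fact every state along the merge) to ``agree'' on the relevant accepting/rejecting information for all non-don't-care continuations, which lets the merged automaton be turned into a well-formed WDBA recognizing a language $D$-equivalent to $L$.

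More concretely, I would proceed as follows. First, invoke the Maler--Pnueli / Eisinger--Klaedtke machinery: the canonical object here is the transition system $\T_{L,D}$ together with a coloring of its states by $\{+,-\}$ according to whether words reaching a state have ``eventually accepting'' or ``eventually rejecting'' non-don't-care behaviour; since $D$ has trivial right-congruence this is well-defined on $\sim_{L,D}$-classes, and it yields a WDBA $\mathcal{A}_{\min}$ with informative $D$-congruence and $L(\mathcal{A}_{\min}) \equiv_D L$ (this is essentially the content of \cite{EisingerK06}, recast via Definition~\ref{def:d-con}; alternatively it follows from Theorem~\ref{thm:decide-IRCD} once one checks a WDBA language stays in $\IB(D) \cap \IC(D)$). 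There is a surjective homomorphism $h$ from the transition system of $\mathcal{A}$ onto $\T_{L,D}$ (the quotient map). If $\mathcal{A}$ does not have informative $D$-congruence, $h$ is not injective, so $\mathcal{A}_{\min}$ has strictly fewer states than $\mathcal{A}$, and $\mathcal{A}_{\min}$ is a WDBA $D$-equivalent to $L$ — contradicting $D$-minimality of $\mathcal{A}$.

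The main obstacle is establishing that the merged/canonical automaton $\mathcal{A}_{\min}$ is genuinely a \emph{WDBA} (every MSCC purely accepting or purely rejecting) and that it is $D$-equivalent to $L$. The subtle point is that in passing to the quotient $\T_{L,D}$ one may create new cycles that did not exist in $\mathcal{A}$, so a priori the quotient could fail the weakness condition. Here I would lean on the fact that $D$ has trivial right-congruence together with the Boolean-combination / $\Sigma^0_2 \cap \Pi^0_2$ characterization of WDBA languages: $L(\mathcal{A})$ lies in $\mathbb{DB} \cap \mathbb{DC}$, and this is preserved under $D$-equivalence for such $D$ (membership only depends on the infinite suffix structure that $D$ cannot distinguish in the relevant sense), so the canonical $D$-congruence automaton carries a weak condition. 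Concretely, I would show that for a state $[w]$ of $\T_{L,D}$, whether the run on $w\alpha$ is ``eventually in an accepting region'' is forced to be constant on the MSCC of $[w]$ — otherwise one exhibits a single non-don't-care $\omega$-word whose run oscillates and which is therefore both accepted and rejected, an outright contradiction. Once weakness of $\mathcal{A}_{\min}$ is in hand, the count $|\mathcal{A}_{\min}| = \ind(\sim_{L,D}) < |Q|$ closes the argument.
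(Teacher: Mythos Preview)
Your high-level strategy matches the paper's: the ``if'' direction is exactly as in the paper, and for ``only if'' both you and the paper aim to equip $\T_{L,D}$ with a weak acceptance condition yielding a language $D$-equivalent to $L$, so that the resulting WDBA has $\ind(\sim_{L,D})$ states and any $D$-minimal WDBA must therefore have informative $D$-congruence.

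However, your concrete justification for why $\T_{L,D}$ carries a weak condition has a gap. The ``oscillation'' argument (``otherwise one exhibits a single non-don't-care $\omega$-word whose run oscillates and which is therefore both accepted and rejected'') does not work as stated: a single word is either in $L$ or not, so there is no direct contradiction, and the word you would build by interleaving may well lie in $D$. What must actually be shown is that any two words $\alpha,\beta \in \Sigma^\omega \setminus D$ whose runs in $\T_{L,D}$ visit the same class $\kappa$ infinitely often agree on $L$-membership. The paper does this by exploiting the \emph{given} WDBA $\mathcal{A}$ rather than arguing abstractly on $\T_{L,D}$: it picks a $\kappa$-state $q$ of $\mathcal{A}$ that is ``last'' in the sense that no $\kappa$-state in a different MSCC of $\mathcal{A}$ is reachable from $q$, reroutes both $\alpha$ and $\beta$ through $q$ (using $\sim_{L,D}$ together with the triviality of $\sim_D$), and then observes that both rerouted runs are trapped in $\mscc(q)$, whence weakness of $\mathcal{A}$ forces agreement. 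This choice of a terminal $\kappa$-state is the key technical step and is absent from your sketch.

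Your alternative route via Theorem~\ref{thm:decide-IRCD} or \cite{EisingerK06} also needs more. Theorem~\ref{thm:decide-IRCD} produces an optimal parity (or B\"uchi, or co-B\"uchi) condition on $\T_{L,D}$, not directly a weak one; passing from ``$L \in \IB(D)\cap\IC(D)$'' to a single \emph{weak} automaton on $\T_{L,D}$ is essentially the content of the theorem you are trying to prove. Citing \cite{EisingerK06} gives you a unique $D$-minimal WDBA, but you still owe the identification of its size with $\ind(\sim_{L,D})$, i.e.\ the upper bound, which again requires building a WDBA on $\T_{L,D}$ and hence the argument above.
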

\begin{proof}
Clearly, for any $u,v \in \Sigma^*$, it holds that $\delta^*(q_0,u)=\delta^*(q_0,v)$ implies $u \sim_{L,D} v$. It follows then that $\mathcal{A}$ has at least $Ind(\sim_{L,D})$ states. Thus, we can conclude that any WDBA with informative $D$-congruence is $D$-minimal, as well.

\medskip
We now show that for each equivalence class $\kappa$ of $\sim_{L,D}$, all $\alpha \in \Sigma^\omega \setminus D$ with infinitely many prefixes in $\kappa$ are in $L$, or all these $\alpha$ are outside $L$. This defines an acceptance status for each equivalence class such that for two equivalence classes that are strongly connected (in the transition system $\TLD$) the acceptance status is the same. Hence, $\TLD$ can be equipped with a weak acceptance condition such that the accepted language is $D$-equivalent to $L$.

As mentioned above, all finite words leading to the same state $q$ are in the same equivalence class of $\sim_{L,D}$. If this equivalence class is $\kappa$, we call $q$ a $\kappa$-state.

Let $\alpha,\beta \in \Sigma^\omega \setminus D$ be such that they both have infinitely many prefixes in $\kappa$. Assume that $\alpha \in L$. We show that also $\beta \in L$ (the other case is symmetric). Let $\alpha = u\alpha'$ and $\beta = v\beta'$, with $u,v \in \kappa$. Let $q$ be a $\kappa$-state in $\A$ such that no other $\kappa$-state in a different MSCC is reachable from $q$, and let $w \in \Sigma^*$ such that $\delta^*(w) = q$. Then $w\alpha' \in L \setminus D$ because $\alpha' \notin D$ and $u \sim_{L,D} w$.

Since $u\alpha'$ and $v\beta'$ have infinitely many prefixes in $\kappa$, the same is true for $w\alpha'$ and $w\beta'$. Thus $\infs(w\alpha'),\infs(w\beta') \subseteq \text{MSCC}(q)$, by choice of $q$. Since $w \alpha' \in L$ as seen above, also $w\beta' \in L$ because $\A$ is a WDBA. We conclude that $\beta = v\beta' \in L$ because $v \sim_{L,D} w$.
\end{proof}

This result for WDBAs raises the question whether it can be extended to larger classes of automata. The next natural candidate is the class of deterministic $\omega$-automata with informative right-congruence. Since these have unique transition graphs defined by the right-congruence of their language, one might hope that they behave similarly to WDBAs under don't care sets with trivial right-congruence. We show that this is not the case. Furthermore, $D$-minimal automata for deterministic $\omega$-automata with informative right-congruence need not to be unique and do not necessarily have informative right-congruence.

\begin{proposition}\label{obs_not_uniqe}
  There is a don't care set $D$ with trivial right-congruence, and a DBA $\A$ with informative right-congruence, such that there are several non-isomorphic $D$-minimal DBA for $\A$, and a $D$-minimal DBA for $\A$ does not necessarily have informative right-congruence.
\end{proposition}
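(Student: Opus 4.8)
The plan is to exhibit a single explicit example --- a finite alphabet, a DBA $\A$ with informative right-congruence, and a regular don't care set $D$ with trivial right-congruence --- and then to verify both assertions by hand. Write $L := L(\A)$.

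First I would pin down what the example must look like. For any DBA $\B$ with $L(\B) \equiv_D L$, two finite words leading to the same state of $\B$ are $\sim_{L(\B),D}$-equivalent; and since $D$ has trivial right-congruence one checks that $\sim_{L(\B),D} = \simLD$ (a word $w\alpha$ with $\alpha\notin D$ is never in $D$, so membership of $w\alpha$ in $L(\B)$ and in $L$ agree). Hence every such $\B$ has at least $\ind(\simLD)$ states. If it has exactly $\ind(\simLD)$ states, then, because $\sim_{L(\B)}$ refines $\simLD$ and $\B$ has at least $\ind(\sim_{L(\B)})$ states, all three indices coincide; the canonical quotient $\B\to\T_{\sim_{L(\B)}}$ is then a bijection, so $\B$ has informative right-congruence, and in fact its transition system is (isomorphic to) $\TLD$. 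So a $D$-minimal DBA lacking informative right-congruence can exist only when the $D$-minimal size is strictly larger than $\ind(\simLD)$, i.e.\ when no Büchi condition on $\TLD$ is $D$-equivalent to $L$, which is to say $L\notin\IB(D)$. Since the proof of Theorem~\ref{t1} equips $\TLD$ with a weak (hence Büchi) acceptance condition whenever $L$ is WDBA-recognizable, $L$ must be DBA- but not co-Büchi-recognizable: the example is forced to use a \emph{genuinely Büchi} language with informative right-congruence.

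Concretely, I would take $\A$ to be a small DBA with informative right-congruence over a two- or three-letter alphabet whose language lies in $\mathbb{DB}\setminus\mathbb{DC}$, built around the classical ``Büchi but not co-Büchi'' pattern (a two-state strongly connected component toggled by one letter, exactly one of whose states is final), enriched only as much as is needed to give the right-congruence the intended shape; and I would take $D$ to be a ``tail event'' --- a regular language whose membership depends only on the ultimate behaviour of the word, hence of trivial right-congruence --- chosen so that (i) $\TLD$ is a proper quotient of the transition system of $\A$; (ii) the family $(\F_0',\F_1')$ induced on $\TLD$ by the canonical quotient (in the sense of Section~\ref{sec:priority-optimization}) contains an accepting set that is covered by the union of rejecting sets, so that by Lemma~\ref{lem:consistent-parity} no Büchi condition on $\TLD$ is $D$-equivalent to $L$, i.e.\ $L\notin\IB(D)$; and (iii) $L\setminus D$ is a proper non-empty subset of $\Sigma^\omega\setminus D$, so that the one-state DBA are not $D$-equivalent to $\A$. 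For such $\A,D$ I would then (a) compute $\simLD$, giving the lower bound $\ind(\simLD)$, and combine it with (ii) to conclude that the $D$-minimal size is at least $m:=\ind(\simLD)+1$ (or a slightly larger constant, depending on the example); (b) produce two DBA $\B_1,\B_2$ with $m$ states, both $D$-equivalent to $L$, arising from two essentially different ways of adding a single state to $\TLD$ in order to ``unfold'' the bad strongly connected component so that a Büchi condition becomes available; one then checks that $\B_1\not\cong\B_2$ (so the $D$-minimal DBA are non-unique) and that $\ind(\sim_{L(\B_1)})<m$ --- the extra state of $\B_1$ is redundant for the language it actually accepts, although it is needed to match $L$ outside $D$ --- whence $\B_1$ has no informative right-congruence, which finishes the proof.

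The main obstacle is producing a correct example. Genuinely Büchi languages with informative right-congruence are delicate: any purely ``infinitary'' acceptance requirement collapses the right-congruence to a single class --- and so cannot occur in an IRC automaton --- unless it is tightly coupled to a count over a finite prefix; but a tail event $D$ is by its nature blind to such finite counts, which makes it awkward to realise $L\in\mathbb{IB}$ together with $L\notin\IB(D)$ and clause~(iii) simultaneously. Once the example is fixed, the lower bound reduces to the finite check underlying (ii) (enumerate the Büchi conditions on $\TLD$ and, for each, name a non-don't care word on which it disagrees with $L$), and the remaining verifications --- that $\B_1$ and $\B_2$ are $D$-equivalent to $L$, non-isomorphic, and that $\B_1$ fails to have informative right-congruence --- are routine computations on the fixed automata.
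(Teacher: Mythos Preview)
Your preliminary analysis is sound: the observation that a $D$-minimal DBA with exactly $\ind(\simLD)$ states must have informative right-congruence is correct, and the consequence that one is forced into $L\notin\IB(D)$ (hence $L\notin\mathbb{DC}$) is a valid guide for searching. But the statement is a pure existence claim, and the proof \emph{is} the example; you have written a specification for the example and explicitly flagged its construction as ``the main obstacle'' without producing one. That is the genuine gap. All of the structural analysis you carry out --- computing $\simLD$, invoking $(\F_0',\F_1')$ and Lemma~\ref{lem:consistent-parity} for the lower bound, arguing about ``unfolding'' a bad SCC --- is scaffolding that the finished proof need not contain.

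The paper's proof is far more direct and shows that you are overestimating the delicacy of the construction. Over $\Sigma=\{a,b,c\}$ it exhibits a three-state DBA $\A$ and takes $D=\Sigma^*b^\omega$. Informative right-congruence of $\A$ is witnessed by the words $b^\omega$ and $cb^\omega$ --- note that these witnesses lie \emph{in} $D$, which is precisely why $\simLD$ collapses (in fact $\ind(\simLD)=1$ here, so your lower bound $\ind(\simLD)+1$ equals $2$). The paper does not compute $\simLD$ at all: the lower bound of two states is obtained in one line from $a^\omega\notin L\cup D$ and $c^\omega\in L\setminus D$. Three pairwise non-isomorphic two-state DBAs $D$-equivalent to $L$ are then written down; one of them accepts the language ``infinitely many $c$'', whose right-congruence is trivial, so it fails IRC. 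Your worry that a non-trivial $\sim_L$ must be ``tightly coupled to a count over a finite prefix'' is off the mark: here $\sim_L$ is made non-trivial by infinitary witnesses that happen to be don't cares, and that is exactly what makes the example work.

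In short: your plan would succeed once instantiated, but the paper shows that the instantiation is a three-state automaton and a one-line lower bound, with none of the machinery you set up actually needed in the final argument.
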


\begin{figure}[!h]
\vspace*{-4mm}
	\centering
	\begin{tikzpicture} [thick,scale=0.7, every node/.style={scale=0.7}]
		\node (q0) [state, initial, initial text = {$\mathcal{A}:$}] {$q_0$};
		\node (q1) [state, accepting, right = of q0] {$q_1$};
		\node (q2) [state, below = of q1] {$q_2$};
		
		\path [-stealth, thick]
		(q0) edge [bend left] node [above=0.1cm] {$b,c$} (q1)
		(q0) edge [bend right] node [left=0.1cm] {$a$} (q2)
		(q1) edge [bend left] node [below=0.1cm] {$b$} (q0)
		(q1) edge node [left=0.1cm] {$a,c$} (q2)
		(q2) edge [bend right] node [right=0.1cm] {$c$} (q1)
		(q2) edge [loop right] node {$a,b$}();
	\end{tikzpicture}
	\begin{tikzpicture}[thick,scale=0.7, every node/.style={scale=0.7}] 
		\node (q0) [state, initial, initial text = {$\mathcal{B}:$}] {$q_0$};
		\node (q1) [state, accepting, right = of q0] {$q_1$};
		
		\path [-stealth, thick]
		(q0) edge [loop above] node {$a,b$}()
		(q0) edge [bend left] node [above=0.1cm] {$c$} (q1)
		(q1) edge [bend left] node [below=0.1cm] {$a,b,c$} (q0);
	\end{tikzpicture}\hspace{.22cm}
	\begin{tikzpicture}[thick,scale=0.7, every node/.style={scale=0.7}] 
		\node (q0) [state, initial, initial text = {$\mathcal{C}:$}] {$q_0$};
		\node (q1) [state, accepting, right = of q0] {$q_1$};
		
		\path [-stealth, thick]
		(q0) edge [loop above] node {$a,b$}()
		(q0) edge [bend left] node [above=0.1cm] {$c$} (q1)
		(q1) edge [bend left] node [below=0.1cm] {$a,c$} (q0)
		(q1) edge [loop above] node {$b$}();
	\end{tikzpicture}\hspace{.2cm}
	\begin{tikzpicture}[thick,scale=0.7, every node/.style={scale=0.7}] 
		\node (q0) [state, initial, initial text = {$\mathcal{D}:$}] {$q_0$};
		\node (q1) [state, accepting, right = of q0] {$q_1$};
		
		\path [-stealth, thick]
		(q0) edge [loop above] node {$a,b$}()
		(q0) edge [bend left] node [above=0.1cm] {$c$} (q1)
		(q1) edge [bend left] node [below=0.1cm] {$a$} (q0)
		(q1) edge [loop above] node {$b,c$}();
	\end{tikzpicture}\vspace*{-1mm}
	\caption{The DBA $\mathcal{A}$ has 3 different corresponding $D$-minimal DBA for the set of \textit{don't care words} $D:=\Sigma^*b^\omega$.}
	\label{fig:IRCA}\vspace*{-5mm}
\end{figure}
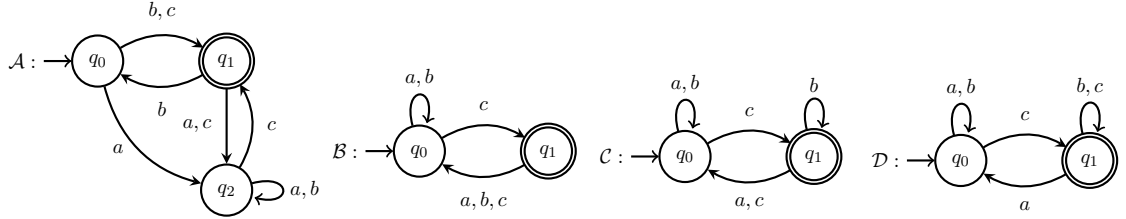

\begin{proof}
  Consider the DBA $\mathcal{A}$ in Figure~\ref{fig:IRCA} and let $L:=L(\mathcal{A})$. Each state corresponds to a different equivalence-class in $\sim_L$, namely $q_0$ to $[\epsilon]$, $q_1$ to $[b]$ and $q_2$ to $[a]$. These equivalence classes are distinct since $b^\omega \in L$ and  $ab^\omega \not\in L$, showing $\epsilon \not\sim_L a$ and $b \not\sim_L a$, and $\epsilon \not\sim_L b$ is witnessed by $cb^\omega \in L$ and $bcb^\omega \not\in L$. Thus, $L(\mathcal{A}) \in \mathbb{IB}$ and $\mathcal{A}$ is minimal for $L$. With $D=\Sigma^*b^\omega$, we have a set of \textit{don't care words} with trivial right-congruence, and any $D$-minimal DBA for $L$ has at least two states as $a^\omega \not\in L(\mathcal{A}) \cup D$ and $c^\omega \in L(\mathcal{A}) \backslash D$. Figure~\ref{fig:IRCA} shows three different $D$-minimal DBA $\mathcal{B}$, $\mathcal{C}$, and $\mathcal{D}$, all with different transition graphs. Note that $L(\mathcal{C}), L(\mathcal{D}) \in \mathbb{IB}$ as $b^\omega \not \sim_{L'} cb^\omega$ for $L' \in \{L(\mathcal{C}), L(\mathcal{D})\}$, but $L(\mathcal{B})$ is the set of all words containing infinitely many $c$, which has trivial right-congruence, so $\B$ does not have informative right-congruence.

\medskip
We used the set  $D=\Sigma^*b^\omega \in \mathbb{DC}$ of don't care words, but the same results hold for the set of don't care words $D'=(\Sigma^*b)^\omega \in \mathbb{DB}$.
\end{proof}

We now consider the computational complexity of the $D$-minimization problem for automata with IRC, and show that it is NP-hard.
\eject

\begin{definition}[$D$-minimization Problem for IRC]\label{def:d-min_w_irc}
	Given a deterministic $\omega$-automaton $\mathcal{A}$ with informative right-congruence, a DPA $\D$ accepting a set of don't care words $D = L(\D)$ with trivial right-congruence, and $k \in \mathbb{N}$, does a corresponding $D$-minimal $\omega$-automaton $\mathcal{B}$ of the same type as $\A$ with at most $k$ states exist?
\end{definition}

\begin{theorem}\label{thm:NPhard}
	The $D$-minimization Problem for IRC is NP-hard for DBA, DCA, and DPA.
\end{theorem}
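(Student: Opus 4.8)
The plan is to reduce from graph $k$-colourability, which is NP-complete. Given a graph $G=(V,E)$ and $k\in\nat$, I would construct in polynomial time a deterministic $\omega$-automaton $\A$ with informative right-congruence, a DPA $\D$ recognising a don't care set $D=L(\D)$ with trivial right-congruence, and a bound $k'=k+c$, where $c$ is a number that depends only on $G$ (and is easy to read off the construction), such that $G$ is $k$-colourable iff there is an automaton of the prescribed type that is $D$-equivalent to $L(\A)$ and has at most $k'$ states. The automaton $\A$ would be built from a ``vertex gadget'' for each $v\in V$, reached from the initial state by a dedicated letter, together with a rigid remainder (sinks and a few auxiliary states) that any $D$-equivalent automaton is forced to reproduce; making all gadget states and auxiliary states pairwise $\sim_L$-inequivalent (so that $\A$ has informative right-congruence) is routine once the distinguishing words below exist.

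The construction exploits the phenomenon already visible in Proposition~\ref{obs_not_uniqe}: a minimal $D$-equivalent automaton need not have informative $D$-congruence, and in fact may need strictly more than $\ind(\simLD)$ states, because $\TLD$ need not admit any acceptance condition that is consistent with $L(\A)$ on $\Sigma^\omega\setminus D$; one then has to pass to a strict refinement of $\simLD$ whose transition system does admit one, and the minimal such refinement is the quantity in question. Concretely, I would arrange that for every edge $e=\{u,v\}$ there are finite words and an $\omega$-word $\gamma_e\in\Sigma^\omega\setminus D$ such that, in $\A$, the relevant continuation is accepted when started from the $u$-gadget but rejected when started from the $v$-gadget, while the two runs are set up so that any transition system sending the $u$-gadget and the $v$-gadget to a common state makes the two runs share their infinity set; on such a transition system no acceptance condition at all — Büchi, co-Büchi, parity, indeed Muller — can be $D$-consistent with $L(\A)$. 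For a set $U\subseteq V$ that is \emph{independent} in $G$ there is no such obstruction, so sending all gadgets of $U$ to one state is harmless, and it does not propagate further because the vertices of $U$ then agree on all non-don't-care continuations. This yields the target equality: the minimum number of states of a $D$-equivalent DBA for $L(\A)$ is $\chi(G)+c$.

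For the two directions: from a proper $k$-colouring of $G$ one takes the colour classes as slots, builds the corresponding refined transition system (each colour class, being independent, collapses consistently to one state), equips it with the obvious $D$-consistent Büchi condition, and obtains a DBA with $k+c$ states. Conversely, given any DBA $\B$ with $L(\B)\equiv_D L(\A)$ and at most $k+c$ states, the right-congruence of $\B$ refines $\simLD$ (two words leading to the same state of $\B$ are $\simLD$-equivalent because $D$ has trivial right-congruence, as noted in the preliminaries), so it induces a slot assignment on $V$; the edge gadgets force this assignment to separate adjacent vertices, since otherwise $\B$ would have to realise two non-don't-care $\omega$-words with the same infinity set but opposite membership, which is impossible for $\B$ whatever its acceptance condition — hence the lower-bound argument is insensitive to whether $\B$ is a DBA, DCA, or DPA. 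A counting argument on the rigid remainder pins down the summand $c$, so the slot assignment uses at most $k$ slots and $\chi(G)\le k$. NP-hardness for DCA then follows by running the construction on $\comp{L(\A)}$: a DCA for it has the same transition system (again with informative right-congruence), the same $D$ with trivial right-congruence works, and the minimum size is unchanged because $L(\B)\equiv_D L(\A)\iff\comp{L(\B)}\equiv_D\comp{L(\A)}$ with equal state counts. NP-hardness for DPA is immediate from the DBA case, using that the lower-bound direction already works for arbitrary acceptance conditions and that the spec is kept ``Büchi-flavoured'' enough that once the obstructing merges are avoided a Büchi — not merely a parity — condition already suffices, so parity never beats the DBA bound.

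The main obstacle will be making the correspondence ``slot assignments usable in $\B$ $\leftrightarrow$ proper colourings of $G$'' exact: the edge gadgets must create neither spurious forced separations (which would make some proper colourings unusable) nor miss any forced ones (which would let a non-colouring collapse $\A$ too far), and the residual-state count $c$ must be the same for every colouring — all under the constraint that $D$ has trivial right-congruence, so the only ``flexibility'' available is in the infinite behaviour of words and cannot be spent in any prefix-dependent way (the analogue of an undefined transition for finite words is unavailable). The technically delicate parts are therefore (i) encoding the edge constraints into a single unresolvable recurrence pattern via a $D$ that is still DPA-recognisable and trivially right-congruent, and (ii) verifying that after collapsing to an independent-set slot structure a Büchi condition already does the job, which is exactly what carries the hardness over to DPA.
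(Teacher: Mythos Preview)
Your high-level strategy (reduce from graph colouring, one gadget per vertex, bound $k'=k+c$) matches the paper's, but the lower-bound argument you sketch has a genuine gap. You want, for every edge $\{u,v\}$, a non-don't-care suffix $\gamma_e$ with $w_u\gamma_e\in L$ and $w_v\gamma_e\notin L$; this is precisely the statement $w_u\not\sim_{L,D} w_v$. For the upper bound you also want every independent colour class to ``collapse consistently to one state''; but in any $D$-equivalent automaton $\B$, two words reaching the same state are $\sim_{L,D}$-equivalent, so collapsing a colour class forces all its vertices to be $\sim_{L,D}$-equivalent. Since $\sim_{L,D}$ is a fixed equivalence relation determined by your construction, these two requirements together say that $\sim_{L,D}$ restricted to $V$ is exactly the non-adjacency relation of $G$. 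That is an equivalence relation only when $G$ is complete multipartite, a class for which colouring is trivial; already for the $5$-cycle your requirements are inconsistent ($1\sim 3\sim 5\sim 2$ via non-edges, yet $1\not\sim 2$ via the edge). So the pair ``$\gamma_e$ exists for every edge'' / ``every independent set is collapsible'' cannot be realised simultaneously for general graphs.

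The paper resolves this by making \emph{all} vertex states $\sim_{L,D}$-equivalent (so $\T_{L,D}$ is trivial and any colour partition is compatible with the right-congruence for the upper bound), and then extracting the colouring for the lower bound not from the state that the finite word $v$ reaches in $\B$, but from the infinity set $\infs[\B](v^\omega)$. The key step is: for an edge $(u,v)$ one has $\infs[\B](u^\omega)\cap\infs[\B](v^\omega)=\emptyset$, because a common state would let one build a non-don't-care word with infinity set $\infs[\B](u^\omega)\cup\infs[\B](v^\omega)$ that must be rejected, whereas for any parity (hence Büchi or co-Büchi) condition the union of two accepting loops is again accepting. One then sets $col(v)$ to be a state of even priority in $\infs[\B](v^\omega)$, obtaining a proper colouring with at most $|Q_\B|-1$ colours. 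Note that this hinges on the closure of parity-type conditions under union of infinity sets and is not the ``same infinity set, opposite membership'' obstruction you describe; your Muller-level argument would need a different mechanism.
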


We show NP-hardness via reduction from the NP-complete Vertex Coloring Problem \cite{Karp1972}:
	Given an undirected graph $\mathcal{G}=(V,E)$ with $|E|>0$ and $k \in \mathbb{N}$, does there exist a coloring $col:V \rightarrow \{1, \ldots, k\}$ of the vertices such that $col(v) \not= col(u)$ for all $(u,v) \in E$?

\medskip
We first briefly describe the idea for the reduction, and then enter the formal details.
For an undirected graph $\mathcal{G}=(V,E)$ with $|E|>0$, we take the vertex set as alphabet, and consider the language of all $\alpha \in V^\omega$ in which only one vertex appears infinitely often. This language can be accepted by a DCA (a DPA with priorities $0$ and $1$) with one state of priority $0$ for each vertex $v$, which is reached with and loops on $v$, and one extra state $q_G$ of priority $1$ that is reached whenever the next vertex in the sequence is different from the previous one. As an example, consider the DCA $\A_G$ in Figure~\ref{fig:AG} without the transition labels of the form $x_v$. These labels are added to ensure that the resulting language has informative right-congruence: the sequence $x_v^\omega$ is accepted only from the state for $v$.

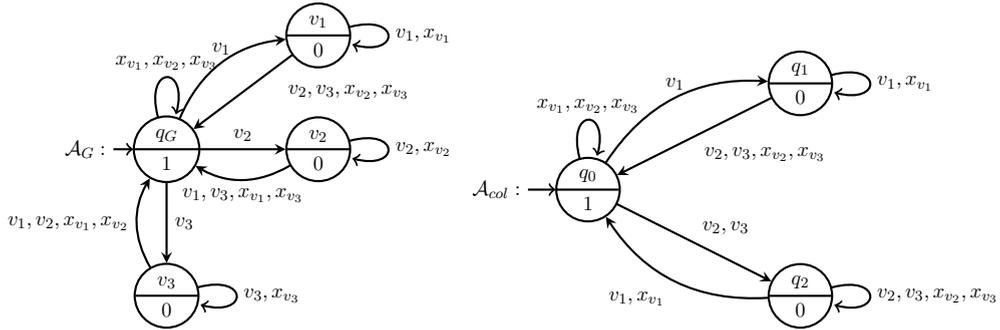
\begin{figure}[!ht]
\vspace*{-1mm}
	\centering
	\begin{tikzpicture}[thick,scale=0.5, every node/.style={scale=0.7}]
		\node (q0) [state with output, initial, initial text = {$\mathcal{A}_G:$}] at (0,-3) {
			$q_G$ \nodepart{lower} $1$
		};
		\node (1) [state with output] at (4,0) {
			$v_1$ \nodepart{lower} $0$
		};
		
		\node (2) [state with output] at (4,-3) {
			$v_2$ \nodepart{lower} $0$
		};
		
		\node (3) [state with output, below = of 2] at (0,-4) {
			$v_3$ \nodepart{lower} $0$
		};

		\path [-stealth, thick]
		(q0) edge [loop above] node {$x_{v_1},x_{v_2},x_{v_3}$}()
		(q0) edge [bend left] node [above=0.01cm] {$v_1$} (1)
		(q0) edge node [above=0.01cm] {$v_2$} (2)
		(q0) edge node [right=0.01cm] {$v_3$} (3)
		(1) edge node [right=0.5cm] {$v_2,v_3,x_{v_2},x_{v_3}$} (q0)
		(1) edge [loop right]  node {$v_1,x_{v_1}$}()
		(2) edge [bend left] node [below=0.01cm] {$v_1,v_3,x_{v_1},x_{v_3}$} (q0)
		(2) edge [loop right]  node {$v_2,x_{v_2}$}()
		(3) edge [bend left] node [left=0.01cm] {$v_1,v_2,x_{v_1},x_{v_2}$} (q0)
		(3) edge [loop right]  node {$v_3,x_{v_3}$}();
	\end{tikzpicture}
	\begin{tikzpicture}[thick,scale=0.7, every node/.style={scale=0.7}]
		\node (q0) [state with output, initial, initial text = {$\mathcal{A}_{col}:$}] at (0,0){
			$q_0$ \nodepart{lower} $1$
		};
		\node (1) [state with output] at (4,2) {
			$q_1$ \nodepart{lower} $0$
		};
		\node (2) [state with output]  at (4,-2) {
			$q_2$ \nodepart{lower} $0$
		};
		
		\path [-stealth, thick]
		(q0) edge [loop above] node {$x_{v_1},x_{v_2},x_{v_3}$}()
		(q0) edge [bend left] node [above=0.01cm] {$v_1$} (1)
		(q0) edge node [above right=0.01cm] {$v_2,v_3$} (2)
		(1) edge node [below right=0.05cm] {$v_2,v_3,x_{v_2},x_{v_3}$} (q0)
		(1) edge [loop right]  node {$v_1,x_{v_1}$}()
		(2) edge [bend left] node [below left=0.05cm] {$v_1,x_{v_1}$} (q0)
		(2) edge [loop right]  node {$v_2,v_3,x_{v_2},x_{v_3}$}();
	\end{tikzpicture}
	\caption{The $D'$-minimal DPA $\mathcal{A}_{col}$ and the DPA $\mathcal{A}_G$ for  $\mathcal{G}=(\{v_1,v_2,v_3\},\{(v_1,v_2),(v_1,v_3)\})$ recognize $D'$-equivalent $\omega$-languages.}
	\label{fig:AG}
\end{figure}

The don't care set then includes all words $\alpha$ in which labels of the form $x_v$ occur infinitely often, or from some point on successive vertices $vv'$ in $\alpha$ are not connected by an edge. With this don't care set, a $D$-minimal DPA $\A_{\textit{col}}$ can use one state for each color class in a $k$-coloring instead of one state for each vertex $v$, as illustrated in Figure~\ref{fig:AG}. This shows NP-hardness of $D$-minimization for DCA and DPA with IRC, and can easily be adapted to also work for DBA.

\medskip
We now give the formal details of the reduction. For an undirected graph $G$ with at least one edge, define the DPA $\mathcal{A}_G:=(\Sigma,V\cup \{q_0\},\delta_G,q_G,c_G)$ with $\Sigma:=V \cup \{x_v \mid v \in V\}$ and for all $u,v \in V, u \neq v$:
	\begin{itemize}
		\item $\delta_G(q_G,u)=u$ and $\delta_G(q_G,x_u)=q_G$,
		\item $\delta_G(u,u)=u$, $\delta_G(u,x_u)=u$, $\delta_G(u,v)=q_G$ and $\delta_G(u,x_v)=q_G$,
\eject
		\item $c_G(q_G)=1$ and $c_G(u)=0$.
	\end{itemize}
Furthermore, define the set of don't care words as $D := D' \cup D''$ with
	\begin{itemize}
	\item $D':=\{w\alpha \mid w \in \Sigma^*, \alpha \in V^\omega, \forall i \in \mathbb{N}: \exists j>i: \alpha(i)\neq\alpha(j) \text{ and } \forall i \in \mathbb{N}:\alpha(i)\neq\alpha(i+1) \rightarrow (\alpha(i),\alpha(i+1))\not \in E\}$
	\item $D'':=\{\alpha \in \Sigma^\omega \mid \exists v \in V: \alpha(i)=x_v \text{ for infinitely many } i\in\mathbb{N}\}.$
        \end{itemize}
        This set $D$ is accepted by the DPA  $(\Sigma,\{q_x\} \cup (V \times \{1,2,3\}), \delta_D,q_x,c_D)$ with
        \[
        \delta_D((u,i),v):=
        \begin{cases}
	(v,1), & \text{if } u=v\\
	(v,2), & \text{if } u \neq v \text{ and } (u,v) \not \in E \\
	(v,3), & \text{if } (u,v) \in E\\
        \end{cases}
        \]
for all $u,v \in V$ and $i \in \{1,2,3\}$, as well as $\delta_D(q_x,v):=(v,2)$ and $\delta_D(q,x_v)=q_x$ for $q \in \{q_x\} \cup (V \times \{1,2,3\})$ and $v \in V$, and finally $c_D(v,i)=i$ and $c_D(q_x)=4$.

\begin{lemma}
	$\mathcal{A}_G$ has informative right-congruence.
\end{lemma}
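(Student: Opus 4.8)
The plan is to show that the canonical right-congruence $\sim_L$ of $L := L(\mathcal{A}_G)$ has exactly $|V|+1$ classes, one for each state of $\mathcal{A}_G$, so that the transition system of $\mathcal{A}_G$ is isomorphic to $\mathcal{T}_{\sim_L}$. Since by the general remarks in the preliminaries there is always a surjective homomorphism from the transition graph of any deterministic automaton for $L$ onto $\mathcal{T}_{\sim_L}$, and $\mathcal{A}_G$ has $|V|+1$ states, it suffices to prove that no two distinct states of $\mathcal{A}_G$ are $\sim_L$-equivalent; then the homomorphism is a bijection on states and, being a homomorphism of transition systems, an isomorphism. Equivalently, for every pair of distinct states $p \ne q$ of $\mathcal{A}_G$ I need to exhibit an $\omega$-word $\gamma$ that is accepted from exactly one of $p,q$.

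The key separating words come from the ``gadget'' letters $x_v$. First I would observe what $L$ is from a given state: from $q_G$, the word $x_v^\omega$ loops on $q_G$ (priority $1$), so it is rejected; from the state $v$, the word $x_v^\omega$ loops on $v$ (priority $0$), so it is accepted; and from a state $u \ne v$, reading $x_v$ goes to $q_G$ and then $x_v^\omega$ loops on $q_G$, so it is rejected. Hence for each $v \in V$, the word $x_v^\omega$ is accepted precisely from state $v$ and rejected from every other state (including $q_G$). This single family of words already separates all the vertex-states $v$ from each other (use $x_v^\omega$ to distinguish $v$ from $u$ when $u\ne v$) and separates each vertex-state $v$ from $q_G$ (again via $x_v^\omega$). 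Since $|E|>0$ there is at least one vertex, so there is at least one such separating word available for the pair $\{q_G, v\}$.

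So the routine case analysis reduces to: (i) $p = v$, $q = u$ with $u \ne v$ vertices — separated by $x_v^\omega$ (accepted from $v$, rejected from $u$); (ii) $p = q_G$, $q = v$ a vertex — separated by $x_v^\omega$ (rejected from $q_G$, accepted from $v$). That exhausts all pairs of distinct states. I would also note, to make the isomorphism claim airtight, that every state of $\mathcal{A}_G$ is reachable from the initial state $q_G$ (via the letter $v$ for state $v$, and trivially for $q_G$), which is already assumed in the preliminaries but worth recalling.

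I do not expect a serious obstacle here; the only thing to be careful about is making sure the reachability and the ``surjective homomorphism onto $\mathcal{T}_{\sim_L}$'' argument is invoked correctly so that equality of the number of classes with the number of states upgrades the homomorphism to an isomorphism — this is the step where one must use that a homomorphism of deterministic transition systems that is bijective on a reachable state set is automatically an isomorphism (it preserves the initial state and commutes with $\delta$, and bijectivity makes the inverse a homomorphism too). The computation of $L$ from each state, which underlies the separating-word claims, is immediate from the transition table of $\mathcal{A}_G$ and the fact that its only priorities are $0$ and $1$.
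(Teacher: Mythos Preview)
Your proposal is correct and takes essentially the same approach as the paper: both arguments use the words $x_v^\omega$ to separate the states, observing that $x_v^\omega$ is accepted from state $v$ and rejected from $q_G$ and from every $u\neq v$. The paper phrases this at the level of representative words ($\epsilon$, $u$, $v$) rather than states, but since these words lead to the states $q_G$, $u$, $v$ respectively, the content is identical; your added remarks on reachability and on upgrading the surjective homomorphism to an isomorphism simply make explicit what the paper leaves implicit.
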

\begin{proof}
	Let $L:=L(\mathcal{A}_G)$. For every $u,v \in V$ with $u \neq v$, we have $ux_u^\omega \in L$ and $x_u^\omega, vx_u^\omega \not \in L$. Consequently, for every $u,v \in V$ with $u \neq v$, it holds that $\epsilon \not \sim_L u$ and $u \not \sim_L v$.
\end{proof}

\begin{lemma} \label{lem:colorG2}
  $G$ can be colored using at most $k$ different colors if, and only if, any $D$-minimal DPA $\mathcal{B}$ with $L(\mathcal{A}_G) \equiv_{D} L(\mathcal{B})$ has at most $k+1$ states.
\end{lemma}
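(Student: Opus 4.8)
The plan is to show that a $D$-minimal DPA for $L(\mathcal{A}_G)$ has exactly $\chi(G)+1$ states, where $\chi(G)$ denotes the chromatic number of $G$; this gives the lemma, since $\chi(G)\le k$ is equivalent to $\chi(G)+1\le k+1$ and all $D$-minimal DPA have the same number of states.

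\emph{A coloring gives a small DPA.} First I would turn a proper coloring of $G$ with $m$ colors into a DPA with $m+1$ states that is $D$-equivalent to $L(\mathcal{A}_G)$, namely the automaton $\mathcal{A}_{col}$ indicated in Figure~\ref{fig:AG}: a hub state $q_0$ of priority $1$, one state $q_i$ of priority $0$ for each used color $i$, transitions from $q_0$ on $v$ to $q_{col(v)}$ and on $x_v$ to $q_0$, and from $q_i$ a self-loop on $v$ and $x_v$ whenever $col(v)=i$ and a transition to $q_0$ otherwise. To verify $L(\mathcal{A}_{col})\equiv_D L(\mathcal{A}_G)$ I would use that a word $\alpha\notin D$ has only finitely many occurrences of letters $x_v$ (otherwise $\alpha\in D''$), hence is eventually an $\omega$-word $\beta$ over $V$, and that $\alpha\notin D'$ forces $\beta$ to be either eventually constant, or to contain infinitely many pairs of consecutive letters that form an edge of $G$. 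In the first case both automata accept (looping in the state of the constant vertex, resp.\ of its color); in the second case $\beta$ changes color infinitely often because the coloring is proper, so $\mathcal{A}_{col}$ visits $q_0$ infinitely often and rejects, while $\mathcal{A}_G$ rejects because $\beta$ is not eventually constant. Taking $m=\chi(G)$ this shows a $D$-minimal DPA has at most $\chi(G)+1$ states.

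\emph{A small DPA gives a coloring.} Conversely, let $\mathcal{B}$ be any DPA with $L(\mathcal{B})\equiv_D L(\mathcal{A}_G)$ and priority function $c$; I want a proper coloring of $G$ with $|Q_\mathcal{B}|-1$ colors. For each $v\in V$ the word $v^\omega$ lies outside $D$ and inside $L(\mathcal{A}_G)$, so $\mathcal{B}$ accepts it. Fixing an $M$ that is a multiple of all cycle lengths of $\mathcal{B}$ and exceeds all transient lengths, I set $q_v:=\delta^*_\mathcal{B}(q_0,v^M)$; then $q_v$ lies on the loop $S_v:=\infs[\mathcal{B}](v^\omega)$, reading $v^M$ from $q_v$ returns to $q_v$ while visiting all of $S_v$, and $\max c(S_v)$ is even. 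The map $v\mapsto q_v$ is a proper coloring: if $(u,v)\in E$ and $q_u=q_v$, then $u^M(v^Mu^M)^\omega$ has infinity set $S_u\cup S_v$ in $\mathcal{B}$, hence even maximal priority, so $\mathcal{B}$ accepts it — but this word is not eventually constant and has the edge $(u,v)$ among its consecutive letters infinitely often, so it lies outside $D$ and is rejected by $\mathcal{A}_G$, a contradiction. Thus $\mathcal{B}$ has at least $\chi(G)$ states of the form $q_v$.

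\emph{The extra hub state — the main obstacle.} It remains to find one more state of $\mathcal{B}$, none of the $q_v$. Since $|E|>0$, fix an edge $(v_1,v_2)$; then $(v_1v_2)^\omega$ is outside $D$ and rejected by $\mathcal{A}_G$, so $\mathcal{B}$ rejects it, and its infinity set $T:=\infs[\mathcal{B}]((v_1v_2)^\omega)$ has odd maximal priority $\rho$. Let $q^*\in T$ be a state with $c(q^*)=\rho$. I would show $q^*\neq q_v$ for all $v$: otherwise $q^*=q_{v^*}$ lies in $S_{v^*}$, so $\max c(S_{v^*})$ is even and strictly larger than $\rho$; then a word that first reaches $q^*$ along a finite word over $\{v_1,v_2\}$ and afterwards alternates forever between looping around $S_{v^*}$ (reading blocks $(v^*)^M$) and looping around $T$ through $q^*$ (reading a suitable power of the rotated period word of $v_1v_2$) has infinity set $S_{v^*}\cup T$, whose maximal priority is $\max c(S_{v^*})$, hence even, so $\mathcal{B}$ accepts it — but this word stays over $V$, is not eventually constant, and has $(v_1,v_2)$ among consecutive letters infinitely often, so it is outside $D$ and rejected by $\mathcal{A}_G$, a contradiction. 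Hence $q^*$ together with the $q_v$ yields $|Q_\mathcal{B}|\ge\chi(G)+1$. The points requiring the most care are the bookkeeping with cycle lengths and positions, so that the interleaved words really have $S_u\cup S_v$ resp.\ $S_{v^*}\cup T$ as infinity set (in particular treating a loop of $(v_1v_2)^\omega$ sitting at odd positions by passing to a rotated period word over $\{v_1,v_2\}$), and the routine checks that each constructed word avoids $D'\cup D''$.
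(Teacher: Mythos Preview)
Your proof is correct and follows the same two-direction structure as the paper. The forward direction (a $k$-coloring yields the $(k{+}1)$-state DPA $\mathcal{A}_{col}$) is essentially identical.

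In the backward direction the paper makes a slightly slicker choice that lets it skip your ``extra hub state'' argument entirely: it defines $col(v)$ to be the minimum-index state in $\infs[\mathcal{C}](v^\omega)$ with \emph{even} priority. Since some state must have odd priority (because $(v_1v_2)^\omega$ is rejected for an edge $(v_1,v_2)$), this choice automatically keeps all values $col(v)$ away from that odd-priority state, so the $(k{+}1)$st state comes for free. The paper then argues that for an edge $(u,v)$ the \emph{entire} loops $\infs[\mathcal{C}](u^\omega)$ and $\infs[\mathcal{C}](v^\omega)$ are disjoint---using exactly the kind of interleaved word $u^a(u^b v^c)^\omega$ you build---which immediately gives $col(u)\neq col(v)$. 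Your choice $q_v=\delta^*(v^M)$ works too, but since $q_v$ need not have even priority you are forced into the second interleaving construction to separate $q^*$ from all $q_v$. Both arguments rest on the same word-interleaving idea; your careful bookkeeping with cycle lengths and rotated period words is in fact precisely what justifies the paper's terse ``Thus, $\infs[\mathcal{C}](u^\omega)\cap\infs[\mathcal{C}](v^\omega)=\emptyset$''.
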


\begin{proof}
	First, we demonstrate that if $G$ can be colored using at most $k$ colors, this implies that any $D$-minimal DPA $\mathcal{B}$ with $L(\mathcal{A}_G) \equiv_{D} L(\mathcal{B})$ has at most $k+1$ states: We assume that a vertex-coloring $col:V \rightarrow \{1,\dots,k\}$ on $G$ exists. Then we construct a corresponding DPA $\mathcal{A}_{col}:=(\Sigma,\{q_0,\dots,q_k\},\delta,q_0,c)$ with transitions $\delta(q_0,v):=q_{col(v)}$, $\delta(q_0,x_v):=q_0$, 	
	\[\delta(q_i,v):= \begin{cases}
		q_i, & \text{if } i=col(v) \\
		q_0, & \text{if } i\neq col(v)\text{,}
	\end{cases} \text{ and }
	\delta(q_i,x_v):= \begin{cases}
		q_i, & \text{if } i=col(v) \\
		q_0, & \text{if } i\neq col(v)\text{,}
	\end{cases}\]
	as well as $c(q_0)=1$ and $c(q_i)=0$ for all $i \in \{1,\dots,k\}$ and $v \in V$. Since $col$ is a valid vertex-coloring, it follows that $L(\mathcal{A}_{col})$ only differs from $L(\mathcal{A}_G)$ on words that are in $D$, and hence $L(\mathcal{A}_G) \equiv_{D} L(\mathcal{A}_{col})$. 

 \medskip Next we show that if a $D$-minimal DPA for $L(\mathcal{A}_G)$ has at most $k+1$ states, then $G$ can be colored using at most $k$ colors: Assume that there exists a DPA $\mathcal{C}:=(\Sigma,\{0,\dots,k\},\delta,0,c)$ with $L(\mathcal{A}_G) \equiv_{D} L(\mathcal{C})$. There is at least one state $q \in \{0,\dots,k\}$ such that $c(q)$ is odd. W.l.o.g.~let $c(0)$ be odd. Since $\mathcal{C}$ needs to accept each $v^\omega$, we can define
        \[
        col:V \rightarrow \{1,\dots,k\} \text{ with } col(v):=\min\{q \in \infs[\mathcal{C}](v^\omega) \mid c(q) \text{ is even}\} \text{ for } v \in V.
        \]
        For $(u,v) \in E$ any $\omega$-word $\alpha \in (V^*uv)^\omega$ is rejected by $\mathcal{C}$. Thus, we have $\infs[\mathcal{C}](u^\omega) \cap \infs[\mathcal{C}](v^\omega)=\emptyset$. This mean that $col$ is a valid vertex coloring on $G$ which uses at most $k$ different colors.
\end{proof}

  As demonstrated in Lemma~\ref{lem:colorG2}, we can reduce any instance of the Vertex Coloring Problem to an instance of the $D$-minimization Problem for IRC for DPA. Furthermore, the reduction is possible in polynomial time, as the alphabet has a size of $|\Sigma|=2|V|$ and the number of states in $\mathcal{A}_G$ is $|V|+1$. It follows then that the $D$-minimization Problem with IRC for DPA is NP-hard.

  The DPA used in the reduction are actually DCA, as we only utilize priorities $0$ (final state) and $1$ (non-final state). Consequently, the $D$-minimization Problem with IRC is NP-hard for DBA and DCA, as well.
This finishes the proof of Theorem~\ref{thm:NPhard}.

\section{Active learning of WDBA}\label{sec:learning}

In \cite{MALER1995316} it is shown that the minimal WDBA for an unknown language $U$ that is WDBA recognizable can be learned in polynomial time from membership and equivalence queries (the time is polynomial in the size of the minimal WDBA for $U$ and the size of the counterexamples provided by the oracle). In this section, we consider the extension of this setting with an additional set $D$ of don't care words with trivial right-congruence that is also unknown to the learner. We only consider WDBA in this context, because this is the only class of $\omega$-automata for which a polynomial time active learning algorithm (with standard membership and equivalence queries) is known. The algorithm in \cite{AngluinF16} is not polynomial, and learns a different representation whose translation to deterministic $\omega$-automata is exponential, in the worst case. Furthermore, in \cite{BohnL21} it is shown that the IRC restriction does not help for active learning, in the sense that polynomial time active learning of regular $\omega$-languages with IRC is not easier than polynomial time active learning of deterministic $\omega$-automata in general.

\medskip
As in \cite{MALER1995316}, our algorithm can use membership and equivalence queries. Since there are don't care words, a membership query $\MQ(u,v)$ for $u \in \Sigma^*$ and $v \in \Sigma^+$ returns ``yes'' if $uv^\omega \in U\setminus D$, ``no'' if $uv^\omega \not\in U\cup D$ and ``don't care'' if $uv^\omega \in D$. For a hypothesis WDBA $\Hy$, an equivalence query $\EQ(\Hy)$ either confirms that $L(\mathcal{H}) \equiv_D U$, or provides a counterexample, which is an ultimately periodic word $uv^\omega \in \Sigma^\omega \setminus D$ with $uv^\omega \in (L(\mathcal{H})\backslash U)\cup(U \backslash L(\mathcal{H}))$.

It turns out that it suffices to modify one part of the algorithm, which is called ``conflict resolution'' in \cite{MALER1995316}, in order to deal with the don't care words. As in the original algorithm, we use an observation table in order to approximate $\sim_{U,D}$. An observation table consists of a prefix closed set $S \subseteq \Sigma^*$, a suffix closed set $E$ of ultimately periodic words with $E \cap D = \emptyset$, and a  function $f:(S \cup S\Sigma) \times E \rightarrow \{\text{``yes'',''no''}\}$ with $f(s,uv^\omega):=\MQ(su,v)$ for all $s \in S$ and $uv^\omega \in E$. Since $D$ has trivial right-congruence and $D \cap E = \emptyset$, we know that $\MQ(su,v)$ is ``yes'' or ``no'' for all $s \in S$ and $uv^\omega \in E$, so we will never ask membership queries for don't care words.

In the table view, the rows are indexed with the elements from $S \cup S\Sigma$ and the columns are indexed with the elements from $E$. The entries in the row for $s \in S \cup S\Sigma$ are an approximation of the information required for identifying the equivalence class of $s$ in $\sim_{L,D}$. This row is captured by the function $f_s: E \rightarrow \{\text{``yes'',''no''}\}$ with $f_s(uv^\omega):=f(s,uv^\omega)$ for each $s \in S \cup S\Sigma$. The observation tables are constructed in such a way that the rows of all elements in $S$ are different, a property that is often referred to as reduced observation table. In the following we only work with reduced observation tables.

The elements from $S$ correspond to the states of the WDBA, and the rows indexed by $S\Sigma$ are used for the transitions.
We say that an observation table is closed if and only if for every $s \in S\Sigma$ there is $t \in S$ with $f_s=f_t$.
For a closed observation table we can then define the corresponding transition system $\TSf = (\Sigma,S,\delta_f,\epsilon)$ with $\delta_f(s,\sigma):=t$ for the unique $t \in S$ with $f_{s\sigma}=f_t$.  The following lemma states that this transition system is isomorphic to the transition system induced by $\sim_{U,D}$ if $S$ contains as many elements as there are equivalence classes.

\begin{lemma}\label{lem:isomorphic}
If $|S| = \ind(\sim_{U,D})$, then $\TSf$ is isomorphic to $\TUD$.
\end{lemma}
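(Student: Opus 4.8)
The plan is to build the isomorphism explicitly via the map $s \mapsto [s]_{\sim_{U,D}}$ and then show it is a well-defined bijection that commutes with transitions. First I would observe that since the table is reduced, all elements of $S$ have pairwise distinct rows, so the map $g : S \to Q_{\sim_{U,D}}$ defined by $g(s) := [s]_{\sim_{U,D}}$ is injective: if $s \sim_{U,D} s'$ for two distinct $s,s' \in S$, then for every $uv^\omega \in E$ we have $su v^\omega \in U \iff s'uv^\omega \in U$ (and neither word is in $D$ since $E \cap D = \emptyset$ and $D$ has trivial right-congruence), hence $f_s = f_{s'}$, contradicting reducedness. Combined with the hypothesis $|S| = \ind(\sim_{U,D})$, injectivity of $g$ forces $g$ to be a bijection onto $Q_{\sim_{U,D}}$.

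Next I would verify that $g$ respects the initial states and the transition functions. The initial state of $\TSf$ is $\epsilon$ (note $\epsilon \in S$ since $S$ is prefix-closed and nonempty) and $g(\epsilon) = [\epsilon]_{\sim_{U,D}}$, which is the initial state of $\TUD$. For the transitions: fix $s \in S$ and $\sigma \in \Sigma$, and let $t = \delta_f(s,\sigma)$, i.e. $t$ is the unique element of $S$ with $f_{s\sigma} = f_t$. I claim $s\sigma \sim_{U,D} t$. The key step here is to argue that $f_{s\sigma} = f_t$ (equality of rows, i.e. agreement on all of $E$) actually implies full $\sim_{U,D}$-equivalence (agreement on all $\omega$-words outside $D$), which does not hold for arbitrary $E$ but does hold once $|S|$ has reached the index: since $g$ is a bijection, the $|S|$ distinct rows $\{f_{s'} : s' \in S\}$ separate the $\ind(\sim_{U,D})$ classes, so two words with the same row must lie in the same class — more carefully, $s\sigma$ is $\sim_{U,D}$-equivalent to some $s' \in S$ (as $g$ is onto and $S\Sigma$-rows are consistent with the closed table via some element of $S$), and that $s'$ has row $f_{s'} = f_{s\sigma} = f_t$, so $s' = t$ by reducedness, giving $s\sigma \sim_{U,D} t$. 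Then $g(\delta_f(s,\sigma)) = [t]_{\sim_{U,D}} = [s\sigma]_{\sim_{U,D}} = \delta_{\sim_{U,D}}([s]_{\sim_{U,D}}, \sigma) = \delta_{\sim_{U,D}}(g(s),\sigma)$, which is exactly the homomorphism condition.

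The main obstacle I anticipate is precisely the step "equality of rows implies $\sim_{U,D}$-equivalence" — for a general closed reduced table this is false, and the argument must genuinely use the cardinality hypothesis $|S| = \ind(\sim_{U,D})$ together with the fact that for every $s \in S\Sigma$ the row $f_s$ equals $f_t$ for some $t \in S$ (closedness). The clean way to package this: define the map on $S \cup S\Sigma$ sending each word to its $\sim_{U,D}$-class; since $S$ already exhausts all classes and its rows are distinct, every row function $f_s$ (for $s \in S\Sigma$) that coincides with $f_t$ for $t \in S$ must correspond to the class $[t]_{\sim_{U,D}}$, and one checks that the word $s$ and the word $t$ genuinely fall in the same class because there are no "extra" classes to fall into — a counting/pigeonhole argument. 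I would also need a small auxiliary fact that no membership query on $E$ ever returns "don't care", which is immediate from $E \cap D = \emptyset$ and $D$ having trivial right-congruence (so $suv^\omega \in D \iff uv^\omega \in D$), ensuring $f$ is genuinely $\{\text{"yes","no"}\}$-valued and the rows are well-defined. Everything else is routine bookkeeping.
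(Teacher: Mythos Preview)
Your proposal is correct and follows essentially the same route as the paper's proof: define $g(s)=[s]_{\sim_{U,D}}$, use ``$\sim_{U,D}$-equivalent $\Rightarrow$ equal row'' together with reducedness to get injectivity, the cardinality hypothesis to get surjectivity, and then for transitions pick the representative $s'\in S$ of $[s\sigma]$, observe $f_{s'}=f_{s\sigma}=f_t$, and conclude $s'=t$ by reducedness. The paper compresses all of this into two sentences, but the underlying argument is the same; your extra discussion of the ``equal rows $\Rightarrow$ equivalent'' obstacle is unnecessary since your actual argument never uses that implication directly, only the converse plus surjectivity of $g$.
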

\begin{proof}
Since $|S| = \ind(\sim_{U,D})$, there is one representative of each equivalence class in $S$. For $s \in S$ and $\sigma \in \Sigma$, the unique $t \in S$ with $f_{s\sigma}=f_t$ must be the representative of the equivalence class of $s\sigma$.
\end{proof}

The following lemma states that for $\omega$-words that belong to the table, the transition system is always in a state that classifies the remaining suffix correctly.
\begin{lemma} \label{lem:consistent}
Let $(S,E,f)$ be a closed observation table, $\alpha \in S\cdot E$, and $w \in\Sigma^*$ be a prefix of $\alpha$ with $\alpha = w\alpha'$. Further, let $r = \delta^*_f(\epsilon,w)$. Then $\alpha \in U$ iff $r\alpha' \in U$.
\end{lemma}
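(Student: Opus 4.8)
The plan is to prove Lemma~\ref{lem:consistent} by induction on the length of the prefix $w$, exploiting the way $\delta_f$ is defined together with the fact that the suffix set $E$ is suffix closed and disjoint from $D$. Recall $\alpha \in S \cdot E$ means $\alpha = s\beta$ for some $s \in S$ and $\beta \in E$; write $\alpha = w\alpha'$ with $w$ a prefix of $\alpha$.

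\medskip

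\textbf{Base case.} If $w = \epsilon$, then $r = \delta^*_f(\epsilon,\epsilon) = \epsilon$, and since $s \in S$ we have $s = \epsilon$ is impossible only if $s \neq \epsilon$; in general, for $w = \epsilon$ we need $\epsilon\alpha' = \alpha \in U$ iff $\epsilon\alpha' \in U$, which is trivial. The more informative anchor is the case $|w| \le |s|$: when $w$ is a prefix of $s$ we will show $r$ and $s$ agree in the sense needed. Actually the cleanest induction is on $|w|$ from $0$ up: I would first prove the auxiliary claim that for every prefix $w$ of $s$ (where $\alpha = s\beta$, $s \in S$, $\beta \in E$), the state $r = \delta^*_f(\epsilon,w)$ satisfies $f_r = f_{s'}$ where $s'$ is the $S$-representative reached, and more importantly that $r \cdot (\text{rest of } s) \cdot \beta \in U \iff s\beta \in U$. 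But this is circular, so instead I proceed directly.

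\medskip

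\textbf{Inductive step via one letter.} The key observation is: for any $s \in S \cup S\Sigma$ and $\sigma \in \Sigma$, if $s\sigma \in S\Sigma$, then by definition of a closed table and of $\delta_f$, the state $t = \delta_f(s,\sigma)$ is the unique element of $S$ with $f_t = f_{s\sigma}$. Since $f_t = f_{s\sigma}$ means $\MQ(t\gamma) = \MQ(s\sigma\gamma)$ for every $\gamma = uv^\omega \in E$, and since $E$ is disjoint from $D$ (so these membership queries are genuinely "yes"/"no"), we get $t\gamma \in U \iff s\sigma\gamma \in U$ for all $\gamma \in E$. Now given $\alpha = w\alpha'$ with $\alpha \in S \cdot E$: I claim more generally that for every prefix $w$ of $\alpha$, writing $r = \delta^*_f(\epsilon,w)$, we have $f_r$ "agrees with the tail", precisely: for the decomposition $\alpha = s\beta$ there is a factorization of the run so that $r\alpha' \in U \iff s\beta \in U = \alpha \in U$. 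I establish this by induction on $|w|$: for $w$ a proper prefix of $s$, at each step we move from a prefix $w_1$ of $s$ to $w_1\sigma$, and the state updates from $r_1$ to $\delta_f(r_1,\sigma)$; using that the rows match on all of $E$ and that the remaining part of $\alpha$ from that point, namely (rest of $s$)$\cdot\beta$, can be handled because $\beta \in E$ while (rest of $s$) is just a finite prefix that gets consumed. Once $w = s$ we have $r = \delta^*_f(\epsilon,s) = $ the $S$-representative of $s$ itself (which is $s$, as $S$ is reduced and prefix closed so $s \in S$ maps to itself), and then $\alpha' = \beta \in E$, and $f_s(\beta) = \MQ(s\beta)$ decides membership. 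For $|w| > |s|$ we continue past $s$: now $\alpha' $ is a suffix of $\beta \in E$, and since $E$ is suffix closed, $\alpha' \in E$ too, so again the row function $f_r$ evaluated at $\alpha'$ gives the answer; one checks inductively that $f_r(\alpha') = \MQ(r\alpha')$ equals $\MQ(s\beta)$ by chaining the one-letter row-agreement facts.

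\medskip

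\textbf{Main obstacle.} The delicate point is handling the portion of the run that consumes the finite tail of $s$ (the part of $s$ after the prefix $w$) before reaching $\beta \in E$ — here one cannot directly invoke the column $\beta \in E$ at the intermediate state because the "question" at that state is (tail of $s$)$\cdot\beta$, which need not lie in $E$. The resolution is to push the induction the other way: induct on the length of the \emph{suffix} $\alpha'$ consumed, i.e.\ prove the statement for all prefixes $w$ simultaneously by a downward induction, using that $S\Sigma$-rows are defined so that $\delta_f$ respects $f$ on all of $E$, and that every relevant suffix $\alpha'$ ultimately lands inside $E$ (because $\beta \in E$ and $E$ is suffix closed). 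Concretely, I would first settle the case $\alpha' \in E$ directly ($r\alpha' \in U \iff \alpha \in U$ by the row-agreement established along the run through $s$ and the definition of $f$), then peel off one letter at a time from the front of $\alpha'$: if the claim holds for $w\sigma$ and suffix $\alpha''$ with $\alpha' = \sigma\alpha''$, then since $\delta_f(r,\sigma)$ has the same row as the appropriate $S\Sigma$-element, the membership of $r\sigma\alpha'' = r\alpha'$ agrees with that of $\delta_f(r,\sigma)\cdot\alpha''$, closing the induction. This downward/peeling structure is exactly the standard correctness argument for $L^*$-style observation tables adapted to the $\omega$-setting, and the only genuinely new ingredient is noting that $E \cap D = \emptyset$ guarantees all the membership queries involved return definite "yes"/"no" answers so that $f$ is well defined and the equivalences make sense.
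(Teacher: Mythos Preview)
Your ``main obstacle'' is a phantom, and your proposed resolution does not actually eliminate it. You observe correctly (for $w=s$) that since $S$ is prefix-closed and the table is reduced, $\delta_f^*(\epsilon,s)=s$. But the same reasoning applies to \emph{every} prefix $w$ of $s$: prefix-closedness gives $w\in S$, and a one-line induction using reducedness shows $\delta_f^*(\epsilon,w)=w$ for all $w\in S$. Hence when $w$ is a prefix of $s$ we have $r=w$ and $r\alpha'=w\alpha'=\alpha$, so the claim is literally trivial in this case. This is exactly how the paper dispatches it. There is nothing to ``push through (rest of $s$)$\cdot\beta$'' at all.

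Your downward-induction workaround does not escape the issue you flagged: the step ``membership of $r\sigma\alpha''$ agrees with that of $\delta_f(r,\sigma)\cdot\alpha''$'' needs $\alpha''\in E$ (that is the only thing row equality $f_{r\sigma}=f_{\delta_f(r,\sigma)}$ speaks about), and when $w\sigma$ is still a proper prefix of $s$ you again have $\alpha''=(\text{tail of }s)\cdot\beta\notin E$. So the inductive step fails at precisely the spot you were trying to fix. Once the prefix-of-$s$ case is handled trivially as above, the remaining case ($|w|\ge|s|$, hence $\alpha'\in E$ by suffix-closedness) is a clean \emph{forward} induction on $|w|$ with base $w=s$: for the step write $\alpha'=\sigma\alpha''$, set $r'=\delta_f(r,\sigma)$, and chain $r\alpha'\in U \Leftrightarrow f(r,\alpha')=\text{``yes''}\Leftrightarrow f(r\sigma,\alpha'')=\text{``yes''}\Leftrightarrow f(r',\alpha'')=\text{``yes''}\Leftrightarrow r'\alpha''\in U$, using $\alpha',\alpha''\in E$ and $f_{r'}=f_{r\sigma}$. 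That is the paper's argument.
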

\begin{proof}
We can write $\alpha = s\beta$ with $s \in S$ and $\beta \in E$. If $w$ is a prefix of $s$, the claim follows since $S$ is prefix-closed and thus $\delta_f^*(w) = w$. Otherwise, $\alpha'$ is a suffix of $\beta$ and hence $\alpha' \in E$ since $E$ is suffix-closed. Let $\alpha' = \sigma\alpha''$ and $r' = \delta_f(r,\sigma) = \delta_f^*(w\sigma)$. We show that $r\alpha' \in U$ iff $r'\alpha'' \in U$, which proves the claim by induction on the length of $w$. We have that $r\alpha' \in U$ iff $f(r,\alpha') = $ ``yes'' iff $f(r\sigma,\alpha'') = $ ``yes'' iff $f(r',\alpha'') = $ ``yes'' since $r'$ is the unique element in $S$ with $f_{r'} = f_{r\sigma}$.
\end{proof}

It remains to determine the accepting states of $\TSf$ in order to obtain a hypothesis WDBA. This is done by marking for each $\alpha \in S \cdot E$ the states in $\infs[\TSf](\alpha)$ with ``yes'' if $\alpha \in U$, and with ``no'' if $\alpha \notin U$. The marking fails if and only if there are two states $s,t$ in the same MSCC such that $s$ is marked ``yes'' and $t$ is marked ``no'' ($s$ and $t$ can be the same if there is a state that is marked in both ways). Otherwise, the marking succeeds and we ask an equivalence query for $\Hy = (\TSf,F)$ with
$F:=S \setminus \{q \in S \mid \text{ a state in MSCC$(q)$ is marked ``no''}\}$
the set of states that are not in the MSCC of a state that is marked ``no''.

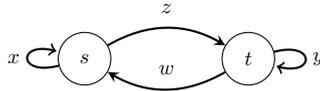
\begin{figure}[!h]
\vspace*{-2mm}
	\begin{center}
		\begin{tikzpicture} [node distance = 3cm, on grid,scale=0.5, every node/.style={scale=0.72}]
			\node (s) [state] {$s$};
			\node (t) [state, right of = s] {$t$};
			
			\path [-stealth, thick]
			(s) edge [loop left] node {$x$}()
			(s) edge [bend left] node [above=0.1cm] {$z$} (t)
			(t) edge [bend left] node [above=0.1cm] {$w$} (s)
			(t) edge [loop right] node {$y$}();
		\end{tikzpicture}
	\end{center}\vspace*{-5mm}
	\caption{The situation described in Lemma~\ref{lem:conflict} with $sx^\omega \in U \backslash D$ and $ty^\omega \not \in U \cup D$.} \label{fig:conflict}
\end{figure}

It remains to deal with the case that the marking fails. The situation described in the following lemma is illustrated in Figure~\ref{fig:conflict}. The proof is based on Lemma~\ref{lem:consistent}.

\begin{lemma} \label{lem:conflict}
If the marking fails, there are $s,t \in S$, and $z,w \in \Sigma^*$ with $\delta_f^*(sz)=t$ and $\delta_f^*(tw)=s$, as well as $x^\omega,y^\omega \in E$ with $s \in \infs[\TSf](sx^\omega)$ and $t \in \infs[\TSf](ty^\omega)$ such that $f(s,x^\omega)=\text{``yes''}$ and $f(t,y^\omega)=\text{``no''}$.
\end{lemma}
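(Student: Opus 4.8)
Here is the proof plan I would follow.

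The plan is to unfold the failure of the marking, extract a \emph{purely} periodic loop word out of each of the two conflicting witnesses, and then use Lemma~\ref{lem:consistent} to push the membership status of these witnesses down to the states $s$ and $t$. By definition, the failure of the marking means that there are states $s,t$ of $\TSf$ (i.e.\ elements of $S$) lying in a common MSCC with $s$ marked "yes" and $t$ marked "no". Since $s$ and $t$ are strongly connected, words $z,w\in\Sigma^*$ with $\delta_f^*(sz)=t$ and $\delta_f^*(tw)=s$ exist immediately, so it only remains to produce the periodic words $x^\omega$ and $y^\omega$. Unfolding the marking further, there is $\alpha\in S\cdot E$ with $\alpha\in U$ and $s\in\infs[\TSf](\alpha)$, and there is $\beta\in S\cdot E$ with $\beta\notin U$ and $t\in\infs[\TSf](\beta)$; by symmetry it suffices to treat $\alpha$, the construction of $y^\omega$ from $\beta$ being entirely analogous.

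The main step is the extraction of $x$. Write $\alpha=s_\alpha\gamma$ with $s_\alpha\in S$ and $\gamma=uv^\omega\in E$, and let $\rho$ be the run of $\alpha$ in $\TSf$, so that $\rho(i)=\delta_f^*(\epsilon,\alpha(0)\cdots\alpha(i-1))$ for all $i$. Since $s\in\infs[\TSf](\alpha)$, the state $s$ occurs at infinitely many positions $i\geq|s_\alpha u|$ of $\rho$, and hence two such positions $p_1<p_2$ are congruent modulo $|v|$ by the pigeonhole principle. As $\alpha(i)=\alpha(i+|v|)$ for all $i\geq|s_\alpha u|$ and $|v|$ divides $p_2-p_1$, the suffix of $\alpha$ starting at position $p_1$ is periodic with period $p_2-p_1$; that is, it equals $x^\omega$ where $x:=\alpha(p_1)\cdots\alpha(p_2-1)\neq\epsilon$. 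This suffix is also a suffix of $\gamma$ (because $p_1\geq|s_\alpha|$), so $x^\omega\in E$ by suffix-closedness of $E$, and therefore $x^\omega\notin D$ since $E\cap D=\emptyset$. Finally, reading $x$ from $\rho(p_1)$ in $\TSf$ leads to $\rho(p_2)$, i.e.\ $\delta_f^*(s,x)=s$, which gives $s\in\infs[\TSf](sx^\omega)$.

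It then remains to determine $f(s,x^\omega)$. Since $x^\omega\in E$ we have $x^\omega\notin D$, and because $D$ has trivial right-congruence also $sx^\omega\notin D$; hence $f(s,x^\omega)$ equals "yes" exactly when $sx^\omega\in U$. I would apply Lemma~\ref{lem:consistent} to $\alpha\in S\cdot E$ with the length-$p_1$ prefix: there $r=\delta_f^*(\epsilon,\alpha(0)\cdots\alpha(p_1-1))=\rho(p_1)=s$ and the remaining suffix is $x^\omega$, so the lemma yields $\alpha\in U$ iff $sx^\omega\in U$. As $\alpha\in U$, we conclude $sx^\omega\in U$, so $f(s,x^\omega)=\text{"yes"}$. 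The analogous argument applied to $\beta$ yields $y^\omega\in E$ with $\delta_f^*(t,y)=t$ and $f(t,y^\omega)=\text{"no"}$, which together with $z$ and $w$ from the first step completes the proof.

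The only genuinely delicate point is this extraction of a properly periodic loop word that still lies in $E$: a priori $E$ contains only ultimately periodic words, and the infinity set $\infs[\TSf](\alpha)$ need not be traversed along the base period of $\gamma$. The pigeonhole choice of two occurrences of $s$ that are congruent modulo $|v|$ and lie beyond the transient prefix $s_\alpha u$ of $\alpha$ is precisely what forces the extracted loop word $x$ to satisfy that $x^\omega$ coincides with the suffix of $\alpha$ beginning at position $p_1$, which is in turn a suffix of $\gamma$, so that suffix-closedness of $E$ becomes applicable. Everything else is bookkeeping with the definitions together with a single invocation of Lemma~\ref{lem:consistent}.
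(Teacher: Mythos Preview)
Your proposal is correct and follows essentially the same route as the paper's own proof: unfold the failed marking to obtain $s,t$ in a common MSCC together with witnesses $\alpha,\beta\in S\cdot E$, decompose these into a prefix reaching $s$ (resp.\ $t$) and a periodic suffix $x^\omega$ (resp.\ $y^\omega$) that lies in $E$, and then invoke Lemma~\ref{lem:consistent}. Your pigeonhole argument (choosing two occurrences of $s$ beyond the transient part that are congruent modulo the period length) makes explicit a step the paper compresses into the single phrase ``we can decompose $\alpha$ as $ux^\omega$ with $\delta_f^*(u)=s$ and $x^\omega\in E$''; in particular you obtain the slightly stronger conclusion $\delta_f^*(s,x)=s$, which is exactly the self-loop picture of Figure~\ref{fig:conflict} used downstream.
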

\begin{proof}
If the marking fails, there are $s,t \in S$ in the same MSCC such that $s$ is marked ``yes'' and $t$ is marked ``no''. Since $s,t$ are in the same MSCC, the are words $z,w$ connecting them. Furthermore, by definition of the marking, there are $\alpha,\beta \in S \cdot E$ with $\alpha \in U$ and $s \in \infs[\TSf](\alpha)$, and $\beta \notin U$ and $t \in \infs[\TSf](\beta)$. Since $\alpha,\beta$ are ultimately periodic, and $s$ and $t$ are visited infinitely often in the respective runs, we can decompose $\alpha$ and $\beta$ as $\alpha = ux^\omega$ and $\beta = vy^\omega$ with $\delta_f^*(u) = s$, $\delta_f^*(v) = t$, and $x^\omega,y^\omega \in E$. By Lemma~\ref{lem:consistent} we obtain that $sx^\omega \in U$ and $ty^\omega \notin U$, and thus $f(s,x^\omega)=\text{``yes''}$ and $f(t,y^\omega)=\text{``no''}$.
\end{proof}

From the situation described in Lemma~\ref{lem:conflict} we want to extend the observation table such that a new state is discovered. This is done by finding a \textit{distinguishing experiment}, which is of the form $rv\alpha \in \Sigma^\omega$ with $r \in S$ such that for $r':=\delta_f^*(rv)$ we have $r'\alpha \not \in U \iff rv\alpha \in U$. Adding $v\alpha$ and its suffixes to $E$ ensures by Lemma~\ref{lem:consistent} that $\delta_{f'}^*(rv) \not= r'$ in the next hypothesis with table $(S',E',f')$. Hence $\delta_{f'}^*(rv) \in S'\setminus S$, which means that the number of discovered states increases.
The algorithm in \cite{MALER1995316} can produce a distinguishing experiment with suffix $(zw)^\omega$, which might be a don't care word in our setting. We therefore have to identify a distinguishing experiment with a suffix that already appears in the table and thus cannot be a don't care word.

\begin{lemma} \label{lem:distinguish}
For $n:=Ind(\sim_{U,D})$ and the setting in Lemma~\ref{lem:conflict}, there is $i \le n$ such that one of the following is a distinguishing experiment: $s(x^nzy^nw)^ix^\omega$, $s(x^nzy^nw)^ix^nzy^\omega$, $t(y^nwx^nz)^iy^\omega$, or $t(y^nwx^nz)^iy^nwx^\omega$.
\end{lemma}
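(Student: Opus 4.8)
The plan is to argue by contradiction: from the assumption that \emph{none} of the listed candidate words is a distinguishing experiment I will build a single ultimately periodic word whose run in $\TUD$ has an infinity set that is neither entirely accepting nor entirely rejecting, which contradicts the fact that $U$ is, up to $D$, recognized by a \emph{weak} automaton on $\TUD$.

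First I would fix notation and record two facts. Put $n = Ind(\sim_{U,D})$; then $\TUD$ has exactly $n$ states, and $|S|\le n$ since any two distinct rows of a reduced table over $E$ (which is disjoint from $D$, so all its entries are "yes"/"no") already witness $\sim_{U,D}$-inequivalence. As $U$ is WDBA-recognizable, $D$-minimizing a WDBA for $U$ and invoking Theorem~\ref{t1} (together with the observation that $\sim_{L,D}$ depends only on $L\setminus D$, hence equals $\sim_{U,D}$) gives a WDBA $\mathcal{W}=(\TUD,G)$ with $L(\mathcal{W})\equiv_D U$; being weak, every infinity set of a run in $\mathcal{W}$ lies inside one MSCC and is therefore either contained in $G$ or disjoint from $G$. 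I would then normalize the conclusion of Lemma~\ref{lem:conflict} so that the loops are tight: since $x^\omega$ (resp.\ $y^\omega$) is an ultimately periodic word in the suffix-closed set $E$, replacing it by a suitable suffix of itself (again in $E$) and replacing $x$ (resp.\ $y$) by a block after which the $\TSf$-run returns to $s$ (resp.\ $t$) lets us assume $\delta^*_f(sx)=s$ and $\delta^*_f(ty)=t$. Combined with $\delta^*_f(sz)=t$ and $\delta^*_f(tw)=s$ this yields, for every $i$, $\delta^*_f\!\big(s(x^nzy^nw)^i\big)=s$, $\delta^*_f\!\big(s(x^nzy^nw)^ix^nz\big)=t$, and symmetrically $\delta^*_f\!\big(t(y^nwx^nz)^i\big)=t$, $\delta^*_f\!\big(t(y^nwx^nz)^iy^nw\big)=s$.

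Now suppose none of the four families is a distinguishing experiment for any $i\le n$. By the identities above, "$s(x^nzy^nw)^ix^\omega$ is not distinguishing" reads $s(x^nzy^nw)^ix^\omega\in U \iff sx^\omega\in U$, and "$s(x^nzy^nw)^ix^nzy^\omega$ is not distinguishing" reads $s(x^nzy^nw)^ix^nzy^\omega\in U \iff ty^\omega\in U$; since $sx^\omega\in U$ and $ty^\omega\notin U$ by Lemma~\ref{lem:conflict}, we get $s(x^nzy^nw)^ix^\omega\in U$ and $s(x^nzy^nw)^ix^nzy^\omega\notin U$ for all $i\le n$ (the $t$-families give the mirror statements). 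All these words lie outside $D$, since their tails $x^\omega,y^\omega\in E$ are not in $D$ and $D$ has trivial right-congruence, so their $U$-membership equals their $L(\mathcal{W})$-membership. I would then study $\zeta := s(x^nzy^nw)^\omega$ in $\TUD$: writing $\pi_j$ for the state $[\,s(x^nzy^nw)^j\,]$ reached at the $j$-th block boundary, the sequence $(\pi_j)$ first repeats, say $\pi_b=\pi_a$ with $a<b\le n$, so from block $a$ on the run of $\zeta$ is periodic with block-period $b-a$. For each $j$ with $a\le j<b$ (hence $j\le n$): the accepted word $s(x^nzy^nw)^jx^\omega$ forces the infinity set $X_j$ of the $x^\omega$-run from $\pi_j$ into $G$, while the rejected word $s(x^nzy^nw)^jx^nzy^\omega$ forces the infinity set $Y_j$ of the $y^\omega$-run from $[\,s(x^nzy^nw)^jx^nz\,]$ to be disjoint from $G$; and because (cycle length) $+$ (tail length) $\le |\TUD|=n$, reading the power $x^n$ from $\pi_j$ (resp.\ $y^n$) in fact traverses all of $X_j$ (resp.\ $Y_j$). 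Hence, each time the run of $\zeta$ processes block $a$ — which happens infinitely often — it visits all of the nonempty $X_a\subseteq G$ (reading the $x^n$-part) and all of the nonempty $Y_a$ with $Y_a\cap G=\emptyset$ (reading the $y^n$-part); so $\infs[\TUD](\zeta)$ meets both $G$ and its complement, contradicting the weakness of $\mathcal{W}$. Thus some candidate with $i\le n$ is distinguishing.

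The main obstacle I anticipate is twofold: the tight-loop normalization of Lemma~\ref{lem:conflict} (ensuring the replacement of $x^\omega$ stays in the suffix-closed $E$ and that the chosen block really returns the $\TSf$-run to $s$), and the counting in the $\zeta$-argument that a power $x^n$ exhausts the \emph{whole} loop $X_j$ — which rests on the bound (cycle length) $+$ (tail length) $\le |\TUD|=n$ and must be used carefully in both $\TSf$ and $\TUD$. The four families and the range $i\le n$ are exactly what makes this close: $i\le n$ is the pigeonhole bound on the $\le n$ block-states of $\TUD$, and the four families realize the two possible starting points ($\zeta$ and its rotation $t(y^nwx^nz)^\omega$) crossed with the two places the run can exit a pumped block (after $x^n$ or after $y^n$); with the normalization available, families (1) and (2) already give the contradiction above, and (3),(4) provide the symmetric argument if one works directly from the weaker statement of Lemma~\ref{lem:conflict}.
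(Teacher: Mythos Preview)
Your proof is correct and rests on the same underlying mechanism as the paper's: work in the $D$-minimal WDBA on $\TUD$, use that $x^n$ (resp.\ $y^n$) is long enough to enter and traverse the entire $x$-loop (resp.\ $y$-loop) from any state, and exploit weakness to rule out an unbounded alternation of acceptance status. The paper phrases this forward, as a bound on MSCC changes: if the biconditional $s(x^nzy^nw)^ix^\omega \in U \Leftrightarrow s(x^nzy^nw)^ix^nzy^\omega \in U$ fails at $i$, then between the $x^n$-segment and the $y^n$-segment the run of $s(x^nzy^nw)^\omega$ has moved to a strictly later MSCC, and this can happen at most $n$ times; once it holds, a short case split yields the distinguishing experiment. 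You phrase it backward, as a contradiction: assuming no candidate is distinguishing, pigeonhole on the block states $\pi_j$ in $\TUD$ gives a period, and then the infinity set of $\zeta$ contains a full accepting $x$-loop and a full rejecting $y$-loop, contradicting weakness. These are dual presentations of the same argument.

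Two remarks on what your version buys. First, you are explicit about the tight-loop normalization ($\delta_f^*(s,x)=s$, $\delta_f^*(t,y)=t$), which the paper's proof silently relies on when it writes $\delta_f^*(s(x^nzy^nw)^i)=s$; your suffix-of-$x^\omega$ argument for achieving this while staying in the suffix-closed $E$ is the right fix, though one must choose the new period so that both the word and the $\TSf$-run are periodic from the chosen cut point (i.e., take the joint period of word and run), otherwise $\hat x^\omega$ need not equal the suffix of $x^\omega$. Second, your observation that with the normalization in place the $s$-families alone already force the contradiction is correct; the $t$-families are indeed redundant once the loops are tight, and the paper lists all four simply because its MSCC-counting sketch does not isolate which side stabilises first.
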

\begin{proof}
The $D$-minimal WDBA $\mathcal{A}=(\Sigma,Q,\delta,q_0,F)$ for $U$ has at most $n$ MSCCs, thus for all $w_1,w_2 \in \Sigma^*$, there is an integer $i \leq n$ such that $\delta^*(w_1w_2^j) \in \text{MSSC}(\delta^*(w_1w_2^i))$ for all $j \ge i$. Since every MSCC in $\mathcal{A}$ either consists entirely of final states or contains no final states, it follows that there must be an integer $i \leq n$ with
\[
\begin{array}{lcl}
s(x^nzy^nw)^ix^\omega \in U &\iff& s(x^nzy^nw)^ix^nzy^\omega \in U \text{ or }\\
 t(y^nwx^nz)^iy^\omega \in U &\iff&  t(y^nwx^nz)^iy^nwx^\omega \in U.
 \end{array}
\]
To see that, note that $\delta^*(s(x^nzy^nw)^ix^n)$ is in the MSCC of the infinity set of $s(x^nzy^nw)^ix^\omega$ by the above observation. So if, for example, $s(x^nzy^nw)^ix^\omega \in U$ and $s(x^nzy^nw)^ix^nzy^\omega \not\in U$, then the infinity set of $s(x^nzy^nw)^ix^nzy^\omega$ must be in another MSCC than the one of $s(x^nzy^nw)^ix^\omega$. This change of MSCC can happen at most $n$ times.

\medskip
We obtain the distinguishing experiment as follows. If $s(x^nzy^nw)^ix^\omega \not\in U$, then $rv\alpha$ with $r=s$, $v= (x^nzy^nw)^i$ and $\alpha=x^\omega$ satisfies the conditions with $r' = \delta_f^*(rv) = s$. If $s(x^nzy^nw)^ix^nzy^\omega \in U$, then choose $r=s$, $v = (x^nzy^nw)^ix^nz$ and $\alpha = y^\omega$ with $r' = \delta_f^*(rv) = t$. Similarly for the other cases.
\end{proof}

According to Lemma~\ref{lem:distinguish}, the following procedure finds a distinguishing experiment in time $\mathcal{O}(n^2)$. Since we do not know the value of $n$, we have to check the candidates from Lemma~\ref{lem:distinguish} for increasing values, called $k$ in the procedure. The procedure returns the word for which all suffixes need to be added to $E$:
\begin{enumerate}
\itemsep=0.95pt
\item Initialize $k=1$, then repeat the following steps:
\item Let $z':=x^kz$ and $w':=y^kw$.
	\item If $sz'y^\omega \in U$, terminate and return $x^kzy^\omega$
	\item If $tw'x^\omega \not \in U$, terminate and return $y^kwx^\omega$
	\item For $i \in \{1, \dots, k\}$ do
   \begin{enumerate}
		\item If $s(z'w')^ix^\omega \not \in U$, terminate and return $(z'w')^ix^\omega$
		\item If $t(w'z')^iw'x^\omega \not \in U$, terminate and return $(w'z')^iw'x^\omega$
		\item If $s(z'w')^iz'y^\omega \in U$, terminate and return $(z'w')^iz'y^\omega$
		\item If $t(w'z')^iy^\omega \in U$, terminate and return $(w'z')^iy^\omega$
	\end{enumerate}
	\item Set $k:=k+1$ and go back to step 2.
	
\end{enumerate}

We now have all the ingredients for the modified learning algorithm:
The algorithm initializes an observation table with $S=\{\epsilon\}$, $E=\emptyset$ and goes through the following steps:
\begin{enumerate}
	\item Add $t \in S\Sigma$ with $\forall s \in S: f_t \neq f_s$ to $S$ until the table is closed, and then construct $\TSf$.
	\item Mark the states in $\infs[\TSf](s\alpha)$ according to $f(s,\alpha)$ for every $s \in S$ and $\alpha \in E$. If the marking fails, find a distinguishing experiment, add its suffixes to $E$ and go back to step 1, else let $\Hy = (\TSf,F)$ with $F:=S \setminus \{q \in S \mid \text{ a state in MSCC$(q)$ is marked ``no''}\}$.
	\item If $\EQ({\mathcal{H}}$) returns a counterexample $\alpha$, add $\alpha$ and its suffixes to $E$ and go back to step 1, else return $\mathcal{H}$.
\end{enumerate} \vspace{0.25cm}
An example run of the algorithm can be found in the appendix.
\begin{theorem}\label{t2}
	Let $n:=Ind(\sim_{U,D})$. The algorithm learns a $D$-minimal WDBA $\mathcal{H}$ with $L(\mathcal{H}) \equiv_D U$ in time that is polynomial in $n$ and the length of a longest counterexample.
\end{theorem}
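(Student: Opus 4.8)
The plan is to show that the algorithm terminates after polynomially many iterations of the main loop, and that upon termination it outputs a $D$-minimal WDBA equivalent to $U$ modulo $D$. The key invariant is that the observation table stays reduced, so $|S|$ never decreases, and that $|S| \le n = \ind(\sim_{U,D})$ is maintained throughout (every element of $S$ represents a distinct $\sim_{U,D}$-class, since the suffixes in $E$ only witness genuine $\sim_{U,D}$-inequivalences because $E \cap D = \emptyset$). Each iteration ends in one of three ways: the table gets closed by adding a state to $S$; the marking fails and we add a distinguishing experiment to $E$; or an equivalence query returns a counterexample whose suffixes are added to $E$. I will argue that in each of the latter two cases the number of discovered states strictly increases in the next round.

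First I would handle the distinguishing-experiment case. By Lemma~\ref{lem:distinguish}, when the marking fails we find a word $r v \alpha$ with $r \in S$, $r' := \delta_f^*(rv)$, and $r'\alpha \notin U \iff rv\alpha \in U$, whose suffix ($x^\omega$, $y^\omega$, or a suffix of these) already appears in $E$, hence is not a don't care word; the procedure following Lemma~\ref{lem:distinguish} locates it in $\mathcal{O}(n^2)$ membership queries by searching over $k = 1, 2, \dots$ up to $n$. Adding $v\alpha$ and all its suffixes to $E$ keeps $E$ suffix-closed and don't-care-free (the new suffixes all end in $x^\omega$ or $y^\omega \in E$). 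By Lemma~\ref{lem:consistent} applied to the next table $(S',E',f')$, we get $\delta_{f'}^*(rv) \ne r'$, so $rv$ now leads to a state not previously in $S$; after re-closing, $|S'| > |S|$. The counterexample case is analogous: a counterexample $\alpha = u v^\omega \notin D$ on which $L(\Hy)$ and $U$ disagree gives, by tracing the run of $\Hy$ and using Lemma~\ref{lem:consistent}, a prefix $w$ of $\alpha$ with $\alpha = w\alpha'$ such that $\delta_f^*(w)$ classifies $\alpha'$ incorrectly relative to $U$; adding $\alpha'$ and its suffixes to $E$ forces a new state. Since $|S|$ strictly increases each time we return to step~1 (except finitely often for pure closure steps, which can happen at most $n$ times total between state-increasing events), and $|S|$ is bounded by $n$, the loop runs at most $O(n)$ times.

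For correctness at termination: the algorithm only returns $\Hy$ when the equivalence query confirms $L(\Hy) \equiv_D U$, so the output is $D$-equivalent to $U$ by definition. It remains to see that $\Hy$ is a WDBA (each MSCC is all-accepting or all-rejecting), which holds precisely because the marking succeeded — no MSCC contains both a "yes"- and a "no"-marked state — and the accepting set $F$ is defined MSCC-wise. Finally, $D$-minimality: at termination $|S| = n$ cannot be exceeded, and $|S| < n$ is impossible, since if some $\sim_{U,D}$-class were unrepresented then $L(\Hy)$ would have to merge two inequivalent classes, producing a witness $u v^\omega \notin D$ on which $\Hy$ errs, contradicting the successful equivalence query; so $|S| = n$, and by Lemma~\ref{lem:isomorphic} $\TSf \cong \TUD$. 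Any WDBA $D$-equivalent to $U$ has at least $n$ states by Proposition~\ref{peq} and the argument of Theorem~\ref{t1}, so $\Hy$ with its $n$ states is $D$-minimal.

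The polynomial time bound then follows by accounting: $O(n)$ iterations; each iteration does $O(n)$ closure checks, each a comparison of rows indexed by an $E$ whose size is polynomial in $n$ and the longest counterexample length $\ell$ (each counterexample or distinguishing experiment contributes $O(\ell + n)$ suffixes, and there are $O(n)$ of them); marking and MSCC computation on $\TSf$ are polynomial in $|S| \le n$; and the distinguishing-experiment search costs $O(n^2)$ membership queries. The main obstacle I anticipate is the bookkeeping in the two "new state is discovered" arguments — in particular verifying carefully, via Lemma~\ref{lem:consistent}, that the newly added column genuinely separates $\delta_{f'}^*(rv)$ from $r'$ in the \emph{next} table rather than the current one, and that $E$ remains suffix-closed and disjoint from $D$ after every augmentation (which is exactly why Lemma~\ref{lem:distinguish} is phrased to produce suffixes ending in words already in $E$). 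Everything else is routine given the lemmas already established.
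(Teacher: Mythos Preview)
Your argument for the counterexample case does not work, and consequently the $O(n)$ iteration bound is incorrect. You claim that a counterexample $\alpha$ yields, via Lemma~\ref{lem:consistent}, a prefix $w$ at which $r=\delta_f^*(w)$ ``classifies $\alpha'$ incorrectly relative to $U$,'' forcing a new state. But Lemma~\ref{lem:consistent} asserts the opposite: once $\alpha \in S\cdot E$, \emph{every} prefix $w$ satisfies $\alpha \in U \iff r\alpha' \in U$; there is no breakpoint in the DFA/Rivest--Schapire sense. The crucial difference from the finite-word setting is that acceptance of $\Hy$ is governed by the MSCC marking of $\infs[\TSf](\alpha)$, not by any table entry $f(r,\alpha')$ along the run. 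Hence after adding the suffixes of $\alpha$ to $E$, the table may remain closed with the same $S$ and the same $\TSf$; what changes is only that $\infs[\TSf](\alpha)$ acquires a mark it did not have before.

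The paper's proof accounts for this: if the table stays closed, the new mark either conflicts with an existing one in the same MSCC (so the marking fails, and \emph{then} a distinguishing experiment yields a new state), or it flips the acceptance status of a previously unmarked MSCC in the next hypothesis. The latter is monotone (marks are never removed since $E$ only grows) and can happen at most $|S|\le n$ times for a fixed $\TSf$, giving an overall bound of $O(n^2)$ rounds rather than $O(n)$. Your treatment of the distinguishing-experiment case and of correctness and $D$-minimality at termination is fine; only the progress argument after an equivalence-query counterexample needs to be replaced by this MSCC-marking bookkeeping.
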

\begin{proof}
The algorithm starts with $E=\emptyset$ and thus the table is closed since all the functions $f_w$ have empty domain. By definition, the first hypothesis then accepts all words and the equivalence query returns a word outside $U$ and $D$ (if there is any, otherwise a $D$-minimal WDBA has been found).
By Lemma~\ref{lem:isomorphic} the transition system constructed by the algorithm is isomorphic to $\TUD$ once $|S|=n$. Then the marking will not fail anymore, so the algorithm will terminate with a correct hypothesis once it has for each MSCC at least one word with infinity set in that MSCC.

\medskip
It remains to show that the algorithm keeps discovering new states. If the table is not closed, a new element is added to $S$. So eventually, a transition system for a hypothesis will be constructed.
If the marking fails, then adding the suffixes of a distinguishing experiment ensures that a new state is discovered (this follows from Lemma~\ref{lem:consistent}, as explained earlier).

Otherwise, the suffixes of a counterexample are added to $E$. If the table remains closed, then the next marking either fails or in the resulting hypothesis at least one MSCC is marked differently than in the previous hypotheses for this transition system. This can happen at most $n$ times before the marking fails or the hypothesis is correct.
This means that the main loop terminates after at most $\mathcal{O}(n^2)$ rounds.
A distinguishing experiment can be found with at most $\mathcal{O}(n^2)$ membership queries (see Lemma~\ref{lem:distinguish} and the algorithm thereafter). The number of rows in the table is at most $n+n|\Sigma|$, and the number of columns is linear in the sum of the lengths of the counterexamples and the distinguishing experiments. By construction, the length of the distinguishing experiments is polynomial in $n$ and the length of the longest element in $E$ (see Lemmas~\ref{lem:conflict} and~\ref{lem:distinguish}).
\end{proof}

Note that the size of an automaton for $D$ does not play any role for the complexity of the learning algorithm. The learner does not know $D$, it only knows that $D$ has a trivial right-congruence. Based on this knowledge, the algorithm is designed in such a way that all membership queries are made for words outside $D$. The only way how $D$ affects the complexity of the learner is by the number of classes in $\sim_{U,D}$, which is at most the number of classes in $\sim_U$ (if two $\omega$-words are equivalent for $\sim_U$, they are clearly also equivalent for $\sim_{U,D}$). The counterexamples for equivalence queries are chosen by the oracle, so their lengths are not under control of the learner. However, since the hypothesis automata used in equivalence queries have at most $Ind(\sim_{U,D})$ many states, there always exists a counterexample of size polynomial in $Ind(\sim_{U,D})$.

\section{Conclusion}\label{sec:conclusion}

We have shown that the problem of priority optimization for DPA under a given don't care set can be solved efficiently, and that active learning of WDBA under a don't care set with trivial right-congruence is possible in polynomial time. Minimization of automata with IRC under such a don't care set $D$ is NP-hard, and $D$-minimal automata are not unique, which shows that DPA with IRC do not inherit all the good properties of WDBA. However, minimization without don't cares for DPAs that accept languages in $\mathbb{IP}$ is possible in polynomial time.

\medskip
Since DPAs are a model that is used in synthesis problems (see \cite{MeyerSL18} for a recent tool based on parity automata), it would be interesting to see if one can identify synthesis problems with don't cares in which the priority minimization can improve the results of such synthesis algorithms.



\vfill\eject

\appendix

\section{Appendix}

\subsection{Example Run of Learning Algorithm from Section~\ref{sec:learning}}
We illustrate the learning algorithm with an example execution. Let $\Sigma:=\{a,b,c\}$, $U:= ab^\omega + ba^\omega + (ab)^\omega$ and $D:=\Sigma^*b^\omega$. We show the closed observation tables and the hypotheses the algorithm produces. The entries in the observation table are denoted as $1$ for "yes" and $0$ for "no".

 \bigskip
 The first hypothesis for $S=\{\epsilon\}$ and $E=\emptyset$ is the WDBA accepting everything. Let $a^\omega$ be the counterexample, which is added to $E$:

\vspace{15mm}
\begin{minipage}{0,4\textwidth}
	\centering
	\begin{tabular}{c|c}  
		
		& $a^\omega$ \\
		\hline
		$\epsilon$ & $0$ \\
		$b$ & $1$ \\
		\hline
		$a$ & $0$ \\
		$ba$ & $1$ \\
		$bb$ & $0$ \\
  \end{tabular}

\end{minipage}
\begin{minipage}{0,1\textwidth}
	$\rightarrow$
\end{minipage}
\begin{minipage}{0,5\textwidth}
\begin{tikzpicture}[thick,scale=0.7, every node/.style={scale=0.7}]
	\node (q0) [state, initial, initial text = {$\mathcal{H}_1:$}] {$\epsilon$};
	\node (q1) [state, right = of q0] {$b$};
	
	\path [-stealth, thick]
	(q0) edge [loop above] node {$a$}()
	(q0) edge [bend left] node [above=0.1cm] {$b$} (q1)
	(q1) edge [bend left] node [below=0.1cm] {$b$} (q0)
	(q1) edge [loop above] node {$a$}();
\end{tikzpicture}
\end{minipage}
\\

\vspace{15mm}

\par The marking fails and we have $s=b$, $t=\epsilon$, $x=a=y$, $z=b=w$. For $k=1$ and $w'=ab$, we find that $\epsilon aba^\omega \not \in U$. As such, we add $aba^\omega$ and its suffixes to the observation table: \\ \\ \vspace{0.5cm}

\begin{minipage}{0,4\textwidth}
	\begin{center}
	\begin{tabular}{c|c|c|c} 
		
		& $a^\omega$ & $ba^\omega$ & $aba^\omega$ \\
		\hline
		$\epsilon$ & $0$  & $1$ & $0$\\
		$a$ & $0$         & $0$ & $0$\\
		$b$ & $1$         & $0$ & $0$\\
		\hline
		$aa$ & $0$        & $0$ & $0$\\
		$ab$ & $0$        & $0$ & $0$\\
		$ba$ & $1$        & $0$ & $0$\\
		$bb$ & $0$        & $0$ & $0$\\
	\end{tabular}
	\end{center}	
\end{minipage}
\begin{minipage}{0,1\textwidth}
	$\rightarrow$
\end{minipage}
\begin{minipage}{0,5\textwidth}
	
	\begin{tikzpicture}[thick,scale=0.7, every node/.style={scale=0.7}]
		\node (q0) [state, initial, initial text = {$\mathcal{H}_2:$}] at (0,0) {$\epsilon$};
		\node (q1) [state] at (0,-2.5) {$a$};
		\node (q2) [state, accepting] at (3,0) {$b$};
		
		\path [-stealth, thick]
		(q0) edge node [left=0.1cm] {$a$} (q1)
		(q0) edge node [above=0.1cm] {$b$} (q2)
		(q1) edge [loop right] node {$a,b$}()
		(q2) edge node [right=0.1cm] {$b$} (q1)
		(q2) edge [loop right] node {$a$}();
		
	\end{tikzpicture}
\end{minipage}\\ \vspace{12mm}

	\vfil\eject

	The marking succeeds and we get $(ab)^\omega \in U$ as a counterexample. We extend the observation table accordingly and repeat the algorithm:\\ \vspace{6mm}

\begin{minipage}{0,5\textwidth}
	\begin{center}
		\begin{tabular}{c|c|c|c|c|c} 
			
			& $a^\omega$  & $ba^\omega$ & $aba^\omega$ & $(ab)^\omega$ & $(ba)^\omega$ \\
			\hline
			$\epsilon$ & $0$& $1$ & $0$ & $1$ & $0$\\
			$a$ & $0$       & $0$ & $0$ & $0$ & $1$\\
			$b$ & $1$       & $0$ & $0$ & $0$ & $0$\\
			$aa$ & $0$      & $0$ & $0$ & $0$ & $0$\\
			$ab$ & $0$      & $0$ & $0$ & $1$ & $0$\\
			\hline
			$ba$ & $1$      & $0$ & $0$ & $0$ & $0$\\
			$bb$ & $0$      & $0$ & $0$ & $0$ & $0$\\
			$aaa$ & $0$     & $0$ & $0$ & $0$ & $0$\\
			$aab$ & $0$     & $0$ & $0$ & $0$ & $0$\\
			$aba$ & $0$     & $0$ & $0$ & $0$ & $1$\\
			$abb$ & $0$     & $0$ & $0$ & $0$ & $0$\\
		\end{tabular}
	\end{center}
\end{minipage}
\begin{minipage}{0,1\textwidth}
	\begin{center}
	$\rightarrow$
	\end{center}
\end{minipage}
\begin{minipage}{0,4\textwidth}
	
	\begin{tikzpicture}[thick,scale=0.7, every node/.style={scale=0.7}] 
		\node (q0) [state, initial, initial text = {$\mathcal{H}_3:$}] at (0,0) {$\epsilon$};
		\node (q1) [state, accepting] at (0,-4) {$a$};
		\node (q2) [state, accepting] at (2.5,0) {$b$};
		\node (q3) [state] at (2.5,-2) {$aa$};
		\node (q4) [state, accepting] at (2.5,-4) {$ab$};
		
		\path [-stealth, thick]
		(q0) edge node [left=0.1cm] {$a$} (q1)
		(q0) edge node [above=0.1cm] {$b$} (q2)
		(q1) edge [bend left] node [above=0.1cm] {$a$} (q3)
		(q1) edge [bend left] node [above=0.1cm] {$b$} (q4)
		(q2) edge node [right=0.1cm] {$b$} (q3)
		(q2) edge [loop right] node {$a$}()
		(q3) edge [loop right] node {$a,b$}()
		(q4) edge [bend left] node [below=0.1cm] {$a$} (q1)
		(q4) edge node [right=0.1cm] {$b$} (q3);
		
	\end{tikzpicture}
\end{minipage}
\\ \vspace{8mm}
	
	The algorithm terminates and outputs $\mathcal{H}_3$ with $L_\omega(\mathcal{H}_3)$. $\mathcal{H}_3$ has informative $D$-congruence and is therefore $D$-minimal.

\end{document}